\title{An $11/6$-Approximation Algorithm for Vertex Cover on String Graphs}
\titlerunning{An $11/6$-Approximation Algorithm for Vertex Cover on String Graphs}
\author{\'{E}douard Bonnet}{Univ Lyon, CNRS, ENS de Lyon, Université Claude Bernard Lyon 1, LIP UMR5668, France \and \url{http://perso.ens-lyon.fr/edouard.bonnet/}}{edouard.bonnet@ens-lyon.fr}{https://orcid.org/0000-0002-1653-5822}{}
\author{Paweł Rzążewski}{Warsaw University of Technology \& University of Warsaw, Warsaw, Poland}{pawel.rzazewski@pw.edu.pl}{https://orcid.org/0000-0001-7696-3848}{}
\authorrunning{\'E. Bonnet and P. Rz\k{a}\.{z}ewski}
\newcommand{\vc}{\textsc{Vertex Cover}\xspace}
\newcommand{\vcp}{\mathsf{vc}\xspace}
\renewcommand{\geq}{\geqslant}
\renewcommand{\leq}{\leqslant}
\newtheorem{question}{Question}
\newcommand{\N}{{\mathbb N}}
\begin{document}

\maketitle

\begin{abstract}
  We present a~1.8334-approximation algorithm for \textsc{Vertex Cover} on string graphs given with a~representation, which takes polynomial time in the size of the representation; the exact approximation factor is $\frac{11}{6}$.
  Recently, the barrier of~2 was broken by Lokshtanov et al. [SoGC '24] with a~1.9999-approximation algorithm.
  Thus we increase by three orders of magnitude the distance of the approximation ratio to the trivial bound of~2.
  Our algorithm is very simple.
  The intricacies reside in its analysis, where we mainly establish that string graphs without odd cycles of length at~most~11 are 8-colorable.
  Previously, Chudnovsky, Scott, and Seymour [JCTB '21] showed that string graphs without odd cycles of length at~most~7 are 80-colorable, and string graphs without odd cycles of length at~most~5 have bounded chromatic number.
\end{abstract}

\section{Introduction}\label{sec:intro}

The \textsc{Vertex Cover} problem\footnote{As an optimization problem, given a~graph $G$, find a~smallest possible vertex subset $S$ such that $G-S$ (i.e., the subgraph of $G$ induced by all the vertices \emph{not} in $S$) is edgeless.} is one of Karp's 21 \textsf{NP}-complete problems~\cite{Karp72}.
While it admits several easy polynomial-time 2-approximation algorithms, it is an open question if an approximation factor of $2-\varepsilon$ can be achieved for some $\varepsilon > 0$.
If the unique games conjecture (UGC) holds, then the answer to the latter question is negative~\cite{DBLP:journals/jcss/KhotR08}.
Under the sole $\mathsf{P} \neq \mathsf{NP}$ assumption, it is currently only known that \textsc{Vertex Cover} cannot be approximated within ratio less than~$\sqrt 2$~\cite{KhotMS17,DinurKKMS18,KhotMS18}.

On several graphs classes, better approximation algorithms of \textsc{Vertex Cover} exist.
For instance, this problem can be exactly solved in polynomial-time on bipartite graphs, as it reduces to a~maximum flow problem~\cite{Konig}.
It also admits a~polynomial-time approximation scheme (PTAS) on planar graphs by Baker's technique~\cite{Baker94}.
On the contrary, a~$(2-\varepsilon)$-approximation algorithm for \textsc{Vertex Cover} with $\varepsilon \in (0,1/2]$ on triangle-free graphs would imply the same holding for general graphs.
Indeed, one can observe that any vertex cover intersects each triangle on at~least two vertices.
One can thus repeatedly include all three vertices of a~triangle in the approximate solution.
Once the input graph has no triangle left, one calls the $(2-\varepsilon)$-approximation algorithm.
This strategy can be seen to achieve approximation factor $\max(3/2,2-\varepsilon)$ on general graphs.
In particular, the former algorithm on triangle-free graphs would refute the UGC.
Thus we say that $(2-\varepsilon)$-approximating \textsc{Vertex Cover} on triangle-free graphs is UGC-hard.
It would be interesting to establish a~dichotomy splitting the hereditary\footnote{closed under vertex removals} graph classes on which we know a~$(2-\varepsilon)$-approximation algorithm from those on which this task is UGC-hard.

The class of~\emph{string graphs} consists of every intersection graph of connected sets in a~planar graph.
Alternatively string graphs are the intersection graphs of (non-self-intersecting, non-closed) curves in the planes, called \emph{strings}.
The set of strings is called a~\emph{geometric representation} or \emph{string representation}.
The geometric representation may be convenient for drawing and in the proofs, but it is not so easy to handle as input.
For that, we will also adopt a~more combinatorial approach, based on the first definition of string graphs. 
A~\emph{representation} of a~string graph $G$ is a~planar graph $P$ and as many connected vertex subsets of $P$ as $G$ has vertices;
two vertices are joined by an edge if and only if their corresponding subsets intersect.
A caveat here is that finding such a representation is \textsf{NP}-hard~\cite{DBLP:journals/jct/Kratochvil91a}.
Furthermore, some string graphs $G$ require representations $P$ whose number of vertices is exponential in $|V(G)|$~\cite{Kratochvil91}. 

Recently Lokshtanov, Panolan, Saurabh, Xue, and Zehavi~\cite{Lokshtanov24} presented a~$(2-\varepsilon)$-approximation algorithm for \textsc{Vertex Cover} on string graphs, for some small positive $\varepsilon \approx 10^{-4}$.
In this paper, we give a~simpler algorithm with improved approximation factor $11/6=1.8333\dots$

\begin{restatable}{theorem}{thmvcstrings}\label{thm:vcstrings}
\vc admits an~$\frac{11}{6}$-approximation algorithm in string graphs given with a~representation, whose running time is polynomial in the size of a representation.
\end{restatable}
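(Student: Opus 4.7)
The plan is to reduce the theorem to the single structural statement highlighted in the abstract: every string graph without odd cycles of length at~most $11$ is $8$-colorable, constructively and in polynomial time in the representation size. Granted this, the algorithm is a direct generalization of the classical triangle-removal $3/2$-approximation sketched in the introduction.

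The algorithm performs two phases. In the first phase, while the current graph contains an odd cycle $C$ of length at~most $11$, I add $V(C)$ to the output cover and delete it; such cycles are detectable in polynomial time by enumeration since $11$ is a constant. In the second phase, on the remaining graph $G^*$ (which has no odd cycle of length $\le 11$), I first apply Nemhauser--Trotter preprocessing to extract the optimal-for-free set $V_1^*$ of LP-value $1$, discarding the set $V_0^*$ of LP-value $0$. On the residual ``crown'' $G^*[V_{1/2}^*]$, which is still $8$-colorable (as an induced subgraph of $G^*$), I output the complement of the largest class of the $8$-coloring provided by the structural theorem.

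For the analysis, partition $V(G)$ into the successively deleted odd cycles of lengths $n_i \in \{3,5,7,9,11\}$, the set $V_1^*$, and the crown $V_{1/2}^*$. The algorithm's output has size at~most $\sum_i n_i + |V_1^*| + \tfrac{7}{8}|V_{1/2}^*|$, while any optimum has size at~least $\sum_i \tfrac{n_i+1}{2} + |V_1^*| + \tfrac{1}{2}|V_{1/2}^*|$; the last summand comes from the LP lower bound $|V_{1/2}^*|/2$ guaranteed by Nemhauser--Trotter on the crown, and the decomposition is legitimate because the three pieces are vertex-disjoint and induced. A piece-by-piece comparison shows each local ratio is at~most $11/6$: the cycle ratio $2n_i/(n_i+1)$ is maximized at $n_i=11$, giving exactly $11/6$; the ratio on $V_1^*$ is $1$; and the color-class-removal ratio on $V_{1/2}^*$ is $(7/8)/(1/2) = 7/4 < 11/6$. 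Summing term by term yields the claimed approximation factor.

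The main obstacle is clearly the coloring theorem itself: one must show that forbidding all odd cycles up to length $11$ in a string graph forces the chromatic number down to at~most $8$, and do so constructively enough to run inside a polynomial-time algorithm. This strengthens the Chudnovsky--Scott--Seymour results in two directions simultaneously (longer forbidden odd cycles, and a fixed small number of colors rather than $80$). The threshold $11$ is chosen precisely so that the cycle-removal step has ratio exactly $11/6$, matching the target ratio; any improvement of either parameter in the coloring theorem would immediately improve the overall approximation factor.
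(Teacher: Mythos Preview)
Your proposal is correct and matches the paper's proof essentially step for step: remove odd cycles of length at most $11$, apply Nemhauser--Trotter, then $8$-color the half-integral part and discard the largest color class, with the same piecewise ratios $11/6$, $1$, and $7/4$. The paper likewise isolates the $8$-coloring theorem (\cref{thm:coloring}) as the only substantive ingredient and devotes the rest of the work to it.
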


Our algorithm works as follows.
We first get rid of all the odd cycles of length at~most~11.
This is done as in the abovementioned reduction to triangle-free graphs.
While there is an odd cycle $C$ with $|V(C)| \leqslant 11$, any vertex cover contains at~least~6 vertices of~$C$.
We thus include all vertices of $C$ in our approximate solution, and remove them from the graph.
This ensures a~$11/6$ approximation factor, if such a~factor can be achieved in the resulting graph.
We can thus assume that our input graph has \emph{odd girth}\footnote{The (odd) girth of a~graph $G$ is the length of a~shortest (odd) cycle of~$G$.} more than 11. 
As the standard linear-programming (LP) formulation of~\textsc{Vertex Cover} is half-integral \cite{DBLP:journals/mp/NemhauserT74}, we can further assume that the vertex cover contains at~least half of the vertices.
Note that these opening steps are also present in the paper of Lokshtanov et al.~\cite{Lokshtanov24}.

We now deviate from the previous algorithm, and simply bound the chromatic number of string graphs of odd~girth larger than~11. 
\begin{theorem}\label{thm:coloring}
Every string graph of odd girth larger than 11 is 8-colorable, and given a~representation, an 8-coloring can be found in time polynomial in the representation.
\end{theorem}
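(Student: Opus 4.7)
The plan is to use a BFS layering from an arbitrary root vertex of $G$ and reduce the problem to $4$-coloring each layer's induced subgraph.

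Fix a vertex $r \in V(G)$ and set $L_i := \{v \in V(G) : d_G(r, v) = i\}$. Edges of $G$ only run inside a single layer or between consecutive layers $L_i$ and $L_{i+1}$. I would start with the elementary observation that for $0 \leqslant i \leqslant 5$, the layer $L_i$ is an independent set of $G$: an edge $uw$ with $u, w \in L_i$, combined with the BFS shortest paths from $r$ to $u$ and to $w$, forms a closed odd walk of length $2i + 1 \leqslant 11$, which contains an odd cycle of length at most $11$, contradicting odd girth larger than $11$. Fixing a BFS tree $T$, the same argument applied to the lowest common ancestor $z$ of $u$ and $w$ in $T$ yields, for $i \geqslant 6$, that any edge $uw$ inside $L_i$ satisfies $\mathrm{depth}_T(z) \leqslant i - 6$; thus adjacent vertices inside $L_i$ are forced to be far apart in $T$.

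The heart of the proof is then the following claim: for every $i$, the induced subgraph $G[L_i]$ is $4$-colorable, and a $4$-coloring is computable in polynomial time from the representation. Granting this, I assign the palette $\{1,2,3,4\}$ to every even-indexed layer and the disjoint palette $\{5,6,7,8\}$ to every odd-indexed layer: intra-layer edges are handled by the per-layer $4$-coloring and inter-layer edges by the palette disjointness, yielding a proper $8$-coloring of $G$ in polynomial time since BFS and the per-layer procedure both are.

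The main obstacle is proving the claim. Here the planar representation plays an essential role: the connected subsets of $P$ representing the vertices of $L_i$ occupy an ``annular'' region between the subsets of $L_{i-1}$ (which each $L_i$-set intersects via its BFS parent) and those of $L_{i+1}$. This endows $G[L_i]$ with an outerstring-like geometric structure, to which the strong odd-girth constraint applies (no odd cycle of length at most $11$, in particular forbidding triangles and $5$-cycles inside $L_i$). I would combine this outerstring-type structure with the large-LCA-depth obstruction on adjacent pairs in $L_i$ to decompose $G[L_i]$ into four independent sets, likely by iteratively peeling off a stable set along the outer boundary of the annulus and invoking structural bounds for triangle-free outerstring graphs. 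The delicate part is calibrating the argument so that exactly $4$ colors suffice per layer, matching the specific constants $11$ and $8$ of the theorem; this is where I expect the analysis to be the most intricate.
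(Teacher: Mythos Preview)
Your high-level scaffolding matches the paper exactly: BFS layers from a root, two disjoint $4$-palettes for even and odd layers, and a reduction to $4$-coloring each $G[L_i]$. The observations about $L_0,\dots,L_5$ being independent and about LCA depth are correct side remarks, though the paper does not use them.

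The genuine gap is the $4$-coloring of $G[L_i]$ itself. You gesture at an ``outerstring-like'' peeling and at ``structural bounds for triangle-free outerstring graphs'', but you do not actually supply a mechanism, and this is precisely where all the work lies. Note that $G[L_i]$ need \emph{not} be an outerstring graph (strings of $L_i$ need not touch a common face), and triangle-free outerstring graphs can have arbitrarily large chromatic number in general, so there is no off-the-shelf bound to invoke. The paper's method is quite different from peeling along a boundary: it performs a \emph{second} BFS, not in $G[L_i]$ but in an auxiliary graph $H$ obtained by restricting the strings of $L_{i-1}\cap N_G(X)$ and of a connected component $X$ of $G[L_i]$ to a topological disk $D$ enclosing $R[X]$. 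This second BFS is rooted at some $w\in L_{i-1}\setminus X$, and one partitions $X$ into sub-layers $X_k$ by distance to $w$ in $H$. The crux is then to show that each $H[X_k]$ is \emph{bipartite}, which gives the $4$-coloring of $X$ (two colors for even $k$, two for odd $k$).

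The bipartiteness of $H[X_k]$ is where the odd-girth bound $11$ and the constant $8$ are actually calibrated. One assumes an induced odd cycle $C$ in $H[X_k]$ and, using that every vertex of $C$ has a neighbor in $L_{i-1}$ whose string touches $\partial D$, iteratively carves out a short subpath $xPy$ (with $x,y\in L_{i-1}$ and $|V(P)|\leqslant 2$) whose strings, together with $\partial D$, separate $s(w)$ from most of an odd cycle $C'$ in $H[V(C)\cup W]$. Because all vertices of $C'$ are at distance $k-1$, $k$, or $k+1$ from $w$ in $H$, any shortest $w$--$z'$ path to a vertex $z'$ of $C'$ must cross this barrier near distance $k$, forcing $V(C')\subseteq N_H^{\leqslant 6}(z)$ for some $z$ on the barrier; a ball-of-radius-$6$ containing an odd cycle then yields an odd cycle of length at most $11$ in $H$, and hence in $G$. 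None of this is captured by boundary-peeling or by LCA-depth constraints in the first BFS tree; the second layering in $H$ and the topological ``barrier'' argument are the missing ideas.
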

A~largest color class in a~proper 8-coloring has size at~least $n/8$ in an $n$-vertex graph.
Thus, its complement is a~vertex cover of size at~most $7n/8$.
As the optimum solution has size at~least $n/2$, this yields a~$7/4=1.75$ factor, and we conclude.
Therefore, the main technical content of the paper is the proof of~\cref{thm:coloring}.

\subparagraph{Outline of the proof of~\cref{thm:coloring}.}
Our strategy goes as follows.
We make a~breadth-first search (BFS) from some arbitrary vertex $u_0$ in each connected component of the graph~$G$.
We reserve colors $1, 2, 3, 4$ for the even-indexed BFS layer, and $5, 6, 7, 8$ for the odd-indexed BFS layer.
We thus need to 4-color each connected component $X$ of each subgraph induced by a~single layer.
Let $R$ be a~string representation of the entire graph, and $R[X]$ its restriction to $X$.
As any face in the arrangement $R[X]$ can be made the infinite face, we can assume that the string of $u_0$ lies on the infinite face of~$R[X]$. 
Now $R[X]$ has a~nice property: The minimal topological disk $D$ that contains it is such that each string of $R[X]$ intersects at~least~one string (in $R \setminus R[X]$) that itself crosses $\partial D$, the boundary of $D$.
Indeed, such a~string is given by a~neighbor in the previous layer.

We now consider the string representation $R_H$ made by $X$ and at~least~one neighbor of the previous layer for each vertex of $X$, and intersect it by $D$.
Each string of $X$ remains in one piece, whereas the strings of the previous layer are possibly split into several strings in $D$.
Let $H$ be the intersection graph of~$R_H$.
We further layer $X$ with the distance in $H$ to a~fixed vertex $w \in V(H) \setminus V(X)$.
Let $X_k$ be the vertices of~$H$ at distance exactly~$k$ from $w$ in~$H$.

We are left with proving that $H[X_k]$ is bipartite.
For the sake of contradiction, we assume that there is an odd cycle $C$ in $H[X_k]$.
Our goal is to show that this odd cycle is contained in a~ball of small radius around some vertex, yielding a~contradiction in the form of a~short odd cycle.
For that, we establish that there is another odd cycle $C'$ (built from~$C$) such that the string of~$w$ is contained in a face $F$ made by few strings of~$N_H[C']$, with most of the rest of $C'$ not intersecting~$F$.
A~string defining~$F$ is then at a~small distance of \emph{every} string of~$C'$ since a~(shortest) path from $w$ to a~vertex of the rest of~$C'$ has to cross the boundary of $F$.
After which, the path has only a~constant number of steps to reach its target since vertices of $N_H[C']$ are at distance~$k-1$, $k$, or $k+1$ from $w$. 
The short odd cycle in $H$ implies the existence of a~short odd cycle in~$G$, a~contradiction. 

\subparagraph{Chromatic number of intersection graphs without short (odd) cycles.}
The chromatic number of intersection graphs of a~given girth has a~long history.
Erd\H{o}s and Gy\'arf\'as asked if girth-4 (i.e., triangle-free) segment intersection graphs have bounded chromatic number~\cite{Gyarfas87}.
Kostochka and Nešetřil~\cite{Kostochka95} raised the same question for 1-string graphs.\footnote{The 1-strings are strings every pair of which intersects at~most~once; note that it is not a property of particular objects but rather their arrangement.}
Both questions were answered in the negative by Pawlik et al.~\cite{Pawlik14}.
It was further shown by Walczak~\cite{Walczak15} that there are triangle-free segment intersection graphs without independent sets of size linear in the number of vertices.
Nevertheless, Kostochka and Nešetřil~\cite{Kostochka98} proved that 1-string graphs of girth at~least~8 are 3-colorable, and asked~\cite[Problem 3]{Kostochka98} whether~8 could be replaced by~5.
This has been confirmed for \emph{outer} 1-strings by Das, Mukherjee, and Sahoo~\cite{Das21}.

More relevant to our \textsc{Vertex Cover} application, the chromatic number of (1-)string graphs of a~given odd girth has also been considered.
(Note indeed that, in the design of a~$(2-\varepsilon)$-approximation algorithm for \textsc{Vertex Cover}, whereas one can remove odd cycles up to any fixed size, the same trick does not apply to 4-vertex cycles.)
McGuinness established that 1-string graphs of odd girth at~least~7 have bounded chromatic number~\cite{McGuinness00,McGuinness01}.
Chudnovsky, Scott, and Seymour~\cite{Chudnovsky21} further showed that string graphs of odd girth at~least~9 are 80-colorable, and string graphs of odd girth at~least~7 have bounded chromatic number.
With proper adjustments, the former result can be turned into an algorithm that inputs a~representation of any string graph $G$ of odd girth at~least~9, and outputs an 80-coloring of $G$ in time polynomial in the representation.
However, this would yield a~significantly worse approximation factor for \textsc{Vertex Cover}.

\subparagraph{Limits of the method.}
The following observation summarizes the trade-off between required odd girth and effective upper bound on the chromatic number, in how they impact the approximation ratio. 

\begin{observation}\label{obs:trade-off}
  Let $\mathcal C$ be a~class such that there is a~polynomial-time algorithm to \mbox{$c$-color} the graphs of $\mathcal C$ of odd~girth at~least an odd positive integer~$g$.
  Then \textsc{Vertex Cover} admits a~polynomial-time $2 \max(\frac{g-2}{g-1}, \frac{c-1}{c})$-approximation in $\mathcal C$. 
\end{observation}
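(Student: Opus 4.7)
The plan is to run a two-phase algorithm combining greedy removal of short odd cycles with the coloring step, glued together by Nemhauser--Trotter/LP-based preprocessing to ensure that the coloring step is applied only where the optimum vertex cover is at~least half of the vertices. In \emph{Phase~1}, while the current graph contains an odd cycle $C$ with $|V(C)| \leq g-2$, add all of $V(C)$ to the cover and delete $V(C)$; repeat until no such cycle remains. Since $g$ is odd, the resulting graph $G'$ has odd girth at~least $g$, and (assuming $\mathcal{C}$ is hereditary, as is the case in the intended applications to string graphs) still lies in $\mathcal{C}$. In \emph{Phase~2}, solve the LP relaxation of \vc on $G'$ and take the Nemhauser--Trotter decomposition $V(G') = V_0 \sqcup V_{1/2} \sqcup V_1$: add $V_1$ to the cover, discard $V_0$, apply the assumed $c$-coloring algorithm to $G'[V_{1/2}]$, find a~largest color class $A$, and add $V_{1/2} \setminus A$ to the cover.

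For the analysis, let $S^\star$ be an~optimal vertex cover of the input graph $G$, and let $S_1$ be the set of vertices added during Phase~1. For each odd cycle $C$ of length $\ell \leq g-2$ removed, $|S^\star \cap V(C)| \geq (\ell+1)/2$, so
\[
  |V(C)| \leq \frac{2\ell}{\ell+1}\,|S^\star \cap V(C)| \leq \frac{2(g-2)}{g-1}\,|S^\star \cap V(C)|;
\]
summing over the (vertex-disjoint) deleted cycles yields $|S_1| \leq \frac{2(g-2)}{g-1}\,|S^\star \cap S_1|$. For Phase~2, Nemhauser--Trotter gives $\mathrm{OPT}(G') = |V_1| + \mathrm{OPT}(G'[V_{1/2}])$ and $\mathrm{OPT}(G'[V_{1/2}]) \geq |V_{1/2}|/2$; combined with $|A| \geq |V_{1/2}|/c$ and $c \geq 2$, a~short calculation gives
\[
  |V_1| + |V_{1/2}| - |A| \;\leq\; \frac{2(c-1)}{c}\,\mathrm{OPT}(G') \;\leq\; \frac{2(c-1)}{c}\bigl(|S^\star| - |S^\star \cap S_1|\bigr).
\]
Adding the two bounds and observing that both coefficients are at~most $2\max\!\bigl(\tfrac{g-2}{g-1},\tfrac{c-1}{c}\bigr)$ yields the claimed approximation ratio.

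The main subtlety is that the two coefficients combine via the maximum rather than the sum, which forces us to apply the LP \emph{after} Phase~1 rather than before: running the LP on $G$ first and then removing odd cycles inside $V_{1/2}$ need not preserve the lower bound $\mathrm{OPT}(G'[V_{1/2}]) \geq |V_{1/2}|/2$ needed by the coloring step. Everything else is routine bookkeeping: a~short odd cycle can be found in polynomial time by BFS, the LP is polynomial, and the $c$-coloring subroutine is polynomial by hypothesis.
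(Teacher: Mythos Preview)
Your proof is correct and follows essentially the same approach as the paper: the paper does not give a separate proof of this observation, but its proof of \cref{thm:vcstrings} is precisely this three-step scheme (greedy removal of short odd cycles, Nemhauser--Trotter, then coloring) instantiated with $g=13$ and $c=8$, and the bookkeeping there matches yours line for line. Your remark that the class must be hereditary is a genuine (and necessary) side condition that the paper leaves implicit.
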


We apply~\cref{obs:trade-off} on represented string graphs with $g=13$ and $c=8$.
Showing a~counterpart of~\cref{thm:coloring} with $g=11$ and $c \leqslant 10$ would improve the ratio to $9/5=1.8$, and with $g=9$ and $c \leqslant 8$, to $7/4=1.75$.
Past this point, our 8-coloring starts being the bottleneck.
The last possible step of this method would be to get parameters $g=7$ and $c \leqslant 6$, for a~ratio of~$5/3=1.666\ldots$
Indeed we recall that string graphs of odd girth at~least~5 have unbounded chromatic number~\cite{Pawlik14}, and do not necessarily have linear-size independent sets~\cite{Walczak15}.
On the complexity side, we do not expect that \textsc{Vertex Cover} is \textsf{APX}-hard (i.e., \textsf{NP}-hard to approximate within some constant ratio $r > 1$) on represented string graphs as a~quasipolynomial-time approximation scheme (QPTAS) exists~\cite{Adamaszek19,HarPeled23}.
However, if a~PTAS also exists, it will have to be found with a~different approach than ours.

Let us emphasize that \cref{thm:coloring} is the only place where our algorithm actually requires a representation of the input to be given.
Thus, an algorithm coloring string graphs with no short odd cycles, that works directly on the input graph~$G$ (not its representation), would yield an approximation algorithm for \textsc{Vertex Cover} in string graphs whose complexity is polynomial in the number of vertices of~$G$.
However, if we are only interested in the decision variant of the problem, the existential statement in \cref{thm:coloring} is sufficient to get the following result.

\begin{theorem}\label{thm:promise}
Given a graph $G$ and an integer $k$, in time polynomial in $|V(G)|$ we can distinguish the following cases:
\begin{compactenum}
\item $G$ is a string graph and $\vc(G) \leq k$, and
\item $G$ is a string graph and $\vc(G) > 11k/6$.
\end{compactenum}
If none of the cases applies, the algorithm terminates but its output is unspecified.
\end{theorem}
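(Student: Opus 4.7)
The plan is to compute two numbers $L \le U$ in time polynomial in $|V(G)|$ such that $L \le \vc(G)$ unconditionally, $\vc(G) \le U$ whenever $G$ is a string graph, and $U \le \tfrac{11}{6} L$ always. The algorithm then outputs case~1 if $U \le \tfrac{11k}{6}$ and case~2 otherwise. Correctness is immediate: under case~1 one has $L \le \vc(G) \le k$, so $U \le \tfrac{11}{6} L \le \tfrac{11k}{6}$; under case~2 one has $U \ge \vc(G) > \tfrac{11k}{6}$.

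To compute $L$ and $U$, I would first replay the preprocessing of \cref{thm:vcstrings} combinatorially, without touching any representation: while the current graph contains an odd cycle $C_i$ of length at most~$11$---detectable in polynomial time, for instance by enumerating vertex subsets of size at most~$11$---remove $V(C_i)$ and add it to a growing set $V_C$. Let $G_t$ be the residual graph, which has odd girth $>11$. On $G_t$ I solve the standard half-integral vertex-cover LP and extract its Nemhauser--Trotter decomposition $V_0^t, V_{1/2}^t, V_1^t$; writing $n' := |V_{1/2}^t|$ and $G' := G_t[V_{1/2}^t]$, define
\[
L := \tfrac{6}{11}|V_C| + |V_1^t| + \tfrac{n'}{2}, \qquad U := |V_C| + |V_1^t| + \tfrac{7n'}{8}.
\]
The bound $L \le \vc(G)$ is unconditional: every vertex cover of $G$ contains at least $\tfrac{|V(C_i)|+1}{2} \ge \tfrac{6}{11}|V(C_i)|$ vertices of each $C_i$ (since $|V(C_i)|$ is odd and at most~$11$), while its restriction to $V(G_t)$ is a vertex cover of $G_t$ and hence has size $\ge |V_1^t| + n'/2$ by LP duality.

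For the bound $\vc(G) \le U$, assume $G$ is a string graph: then $G'$ is too (as an induced subgraph) and inherits odd girth $>11$, so \cref{thm:coloring} gives $\chi(G') \le 8$, producing an independent set of size $\ge n'/8$ and hence a vertex cover $S' \subseteq V(G')$ of size $\le 7n'/8$. By the Nemhauser--Trotter property---no edges within $V_0^t$ or between $V_0^t$ and $V_{1/2}^t$, as they would violate the LP constraints---$V_1^t \cup S'$ covers all edges of $G_t$, and $V_C \cup V_1^t \cup S'$ is a vertex cover of $G$ of size at most~$U$. A direct calculation yields $\tfrac{11}{6} L - U = \tfrac{5}{6}|V_1^t| + \tfrac{n'}{24} \ge 0$, as required. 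The main conceptual subtlety---and the whole reason \cref{thm:promise} must be phrased as a promise problem rather than a full approximation algorithm for representation-free inputs---is that without a string representation we cannot efficiently \emph{construct} the 8-coloring underlying the upper bound $7n'/8$; but for a decision question its mere \emph{existence}, as asserted by \cref{thm:coloring}, is enough to certify $\vc(G') \le 7n'/8$.
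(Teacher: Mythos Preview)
Your proposal is correct and follows essentially the same approach as the paper: run the two combinatorial phases (greedy short-odd-cycle removal and the Nemhauser--Trotter decomposition), then use only the \emph{existence} part of \cref{thm:coloring} to certify the upper bound without constructing the coloring. The paper phrases the decision as ``accept iff $\sum_{C\in\mathcal C}\lceil |V(C)|/2\rceil + |Y| + |V(G'')|/2 \le k$'' rather than via your pair $L,U$, but the two tests are equivalent up to a slight slackening (your $\tfrac{6}{11}|V_C|$ in place of the tighter $\sum\lceil |V(C_i)|/2\rceil$), and the underlying argument is the same.
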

\cref{thm:promise} can be seen as a polynomial-time solution to the \emph{promise variant} of \textsc{Vertex Cover}, where the input is guaranteed to satisfy one of the above conditions; see~\cite{DBLP:journals/jacm/RomeroWZ23,DBLP:journals/siglog/KrokhinO22}.

Let us remark that Lokshtanov et al.~\cite{Lokshtanov24} claim that their algorithm can actually return a~1.9999-approximate solution in time polynomial in the number of vertices of the input graph but this is, unfortunately, not true.
Indeed, one of the crucial steps in their argument is the application of a result of Lee~\cite[Theorem 4.2]{DBLP:conf/innovations/Lee17}, whose proof has recently been realized to be flawed\footnote{The inequality at the sixth line of the proof of Lemma 4.14 need not hold.}~\cite{Leeprivate,DBLP:journals/corr/abs-2312-07962}.
Seemingly the result \cite[Theorem 4.2]{DBLP:conf/innovations/Lee17} can be reproven in a different way but only for \emph{region intersection graphs} (a~generalization of string graphs)~\cite{Daviesprivate}, and the new approach requires that the representation is given (and still the final approximation ratio is much closer to~2 than the one given by~\cref{thm:vcstrings}).
Similarly, Lokshtanov et al.~\cite{Lokshtanov24} claim that their approach applies for all proper \emph{induced-minor-closed} classes, but the problem lies in the same place: we are not aware of any way to fix the proof of Lee~\cite[Theorem 4.2]{DBLP:conf/innovations/Lee17} in such a general setting.

We leave as open questions if \textsc{Vertex Cover} admits, for some constant $\varepsilon > 0$, a~$(2-\varepsilon)$-approximation algorithm on string graphs given without representation (which is very likely to be the case), and on classes excluding a~fixed graph~$H$ as an induced minor.\footnote{i.e, $H$ cannot be obtained from graphs of the class by removing vertices and contracting edges} 

\subparagraph{Further discussion and questions.}
An interesting notion to attack our question of which classes admit a~less-than-2 approximation factor is that of $r$-controlledness (see~\cite{Chudnovsky21}).
For an integer $r\geq 1$, a~class $\mathcal C$ is \emph{$r$-controlled} if there exists a function $f : 
\N \to \N$, such that for every $G \in \mathcal{C}$, the chromatic number of $G$ is bounded by $f(\chi^r(G))$,
where $\chi^r(G)$ is the maximum chromatic number of a ball $B^r(v)$ of~radius~$r$ centered at a vertex $v$ of $G$ (i.e., the graph induced by the vertices at distance at~most~$r$ from~$v$).
The function $f$ is in the definition above is called a \emph{binding function}.

In particular, every \emph{$\chi$-bounded} class $\mathcal C$, i.e., a class whose chromatic number $\chi$ is upper bounded by a~function of its clique number $\omega$, is 1-controlled.
Indeed, if $f$ is a~$\chi$-binding function for $\mathcal C$, we have $\chi^1(G) \geqslant \omega(G)$ and $\chi(G) \leqslant f(\omega(G))$ for every $G \in \mathcal C$.
Thus $\chi(G) \leqslant f(\chi^1(G))$ (note that $f$ can be chosen non-decreasing).
We further say that $\mathcal C$ is \emph{effectively $r$-controlled} if there is a~polynomial-time algorithm that turns any $\chi^r(G)$-coloring of $B^r(v)$ for any $G \in \mathcal C$ and each $v \in V(G)$, into an $f(\chi^r(G))$-coloring of~$G$. 
Note that $\mathcal C$ is $r$-controlled and $G \in \mathcal{C}$ has odd girth larger that $2r+1$, then for every $v \in V(G)$, the subgraph induced by $B^r(v)$ is bipartite.
Consequently, similarly to~\cref{obs:trade-off}, we obtain the following.

\begin{observation}\label{obs:controlled}
   \textsc{Vertex Cover} admits a~polynomial-time $\max(\frac{2r+1}{r+1},2(1-\frac{1}{f(2)}))$-approxi\-mation algorithm on any effectively $r$-controlled class with binding function~$f$. 
\end{observation}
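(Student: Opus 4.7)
The plan is to mimic the two-phase blueprint of \cref{obs:trade-off}: first peel off short odd cycles, and then use the effective $r$-controlledness hypothesis to color what remains. In the first phase, while the current graph contains an odd cycle $C$ of length $\ell \leqslant 2r+1$, I place all of $V(C)$ into the approximate cover and delete these vertices from the graph. Any vertex cover hits at least $(\ell+1)/2$ vertices of $C$, so the local cost ratio is $2\ell/(\ell+1)$, which is increasing in $\ell$ and therefore bounded by $\tfrac{2r+1}{r+1}$. Such a cycle, if it exists, can be detected in polynomial time (e.g., by BFS from each vertex, checking for an edge joining two vertices at the same layer of depth at most~$r$), so after polynomially many iterations we are left with a subgraph $G'$ of odd girth strictly greater than $2r+1$.

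In the second phase, I apply the standard half-integrality reduction~\cite{DBLP:journals/mp/NemhauserT74} on $G'$: solve its LP relaxation, add all value-$1$ vertices to the approximate cover, discard the value-$0$ independent set, and restrict attention to the subgraph $G''$ induced by the value-$\tfrac{1}{2}$ vertices, on which the LP enforces $\vcp(G'') \geqslant |V(G'')|/2$. Since $G''$ still has odd girth $>2r+1$, BFS from any $v\in V(G'')$ certifies that $G''[B^r(v)]$ is bipartite: an edge inside a BFS layer of depth $k\leqslant r$ would close an odd cycle of length $2k+1 \leqslant 2r+1$, contradicting the odd girth assumption. The BFS bipartition yields a $2$-coloring of every $B^r(v)$ in polynomial time, so $\chi^r(G'')\leqslant 2$, and the algorithm witnessing effective $r$-controlledness converts these local $2$-colorings into an $f(2)$-coloring of $G''$ in polynomial time. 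The complement of a largest color class is then a vertex cover of $G''$ of size at most $|V(G'')|\bigl(1-\tfrac{1}{f(2)}\bigr)$, giving ratio $2\bigl(1-\tfrac{1}{f(2)}\bigr)$ against $\vcp(G'')$.

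Aggregating the two phases in the usual way (each removed odd cycle contributes at most $\tfrac{2r+1}{r+1}$ times its share of the optimum, and the coloring phase contributes at most $2\bigl(1-\tfrac{1}{f(2)}\bigr)$ times the optimum of the residue) yields the announced factor $\max\bigl(\tfrac{2r+1}{r+1},\,2\bigl(1-\tfrac{1}{f(2)}\bigr)\bigr)$. There is no real obstacle; the single observation that drives the argument is that an odd girth threshold of $2r+1$ is precisely what is needed to bipartize every radius-$r$ ball, and this is exactly what unlocks the $r$-controlledness assumption with the small value $\chi^r(G'')\leqslant 2$.
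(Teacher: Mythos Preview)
Your proposal is correct and follows precisely the approach the paper indicates: the paper does not spell out a proof of this observation but reduces it to \cref{obs:trade-off} via the remark that odd girth larger than $2r+1$ forces every radius-$r$ ball to be bipartite, which is exactly your two-phase argument (peel short odd cycles, then apply half-integrality and the effective $r$-controlledness with $\chi^r \leqslant 2$). One tiny imprecision: an intra-layer edge at depth $k$ closes an odd cycle of length \emph{at most} $2k+1$, not exactly $2k+1$, but this only strengthens your conclusion.
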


Being $r$-controlled for at~least~some~$r$ is quite an inclusive property among ``structured'' graph classes~\cite{Chudnovsky21}.
Thus we wonder the following.

\begin{question}\label{q:controlled}
  Which hereditary classes $\mathcal C$ are not (effectively) $r$-controlled for any~$r$, and yet \textsc{Vertex Cover} admits a~polynomial-time $(2-\varepsilon)$-approximation algorithm in~$\mathcal C$, for some constant $\varepsilon > 0$?   
\end{question}
A~candidate class for~\cref{q:controlled} is made by all graphs of girth at~least~5.
There is no~$r$ for which this class is $r$-controlled, and to our knowledge, no $(2-\varepsilon)$-approximation algorithm nor UGC-hardness proof.

As we already mentioned, intersection graphs of 1-strings with girth at~least~8 are \mbox{3-chromatic}~\cite{Kostochka98}.
Indeed, these graphs are 2-degenerate, i.e., (all their induced subgraphs) have a~vertex of degree at~most~2, which gives a~recursive 3-coloring strategy.
While string graphs of large odd girth are not $d$-degenerate for any $d$ (they contain arbitrarily large bipartite complete graphs), we conjecture that they are nevertheless 3-colorable.

\begin{conjecture}\label{conj:3col}
  There is an integer $k$ such that the class of string graphs of odd girth at~least~$k$ is 3-chromatic.
\end{conjecture}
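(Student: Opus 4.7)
The natural plan is to push the BFS-based coloring argument of~\cref{thm:coloring} to substantially larger odd girth, trading the coloring parameter for structural refinement. Recall that \cref{thm:coloring} uses 4 colors per parity of BFS layer, and the bound of 4 per layer comes from the fact that each restricted component $H[X_k]$ is bipartite. To reach 3 colors globally, I would iterate the ``restrict to a topological disk and re-BFS'' step: by the same accounting as in the proof of \cref{thm:coloring}, each iteration should need only a constant increase in the required odd girth to keep any would-be odd cycle in a sublayer from escaping a small ball. After a bounded number of iterations the ``deepest'' components should be so simple (e.g., bipartite with a controlled structure on their parts, or even independent sets) that the resulting bipartitions across all BFS layers can be synchronized into a single global 3-coloring.

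A cleaner reformulation goes through \cref{obs:controlled}: it would suffice to show that string graphs are $r$-controlled for some $r$ with a binding function satisfying $f(2)=3$, and then take $k = 2r+3$, since odd girth at least $2r+3$ makes every $r$-ball bipartite, hence properly 3-colorable, and effective $r$-controllability lifts this to a 3-coloring of the whole graph. This is in the same spirit as the Kostochka--Ne\v{s}et\v{r}il~\cite{Kostochka98} argument for 1-strings of girth~8, which yields 3-colorability from 2-degeneracy; since general string graphs are not $d$-degenerate for any $d$ (they contain arbitrarily large complete bipartite graphs), degeneracy has to be replaced by a controlled recursive decomposition, and the disk-restriction technique introduced for \cref{thm:coloring} is a natural source of such a decomposition.

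The principal obstacle is that no known technique brings the chromatic number bound as low as~3, even for very large odd girth. The constructions of Pawlik et al.~\cite{Pawlik14} and Walczak~\cite{Walczak15} have odd girth~5 yet unbounded chromatic number and sublinear independence number, so any proof must genuinely invoke the odd-girth hypothesis to break them; moreover, the known $r$-controllability results for string graphs~\cite{Chudnovsky21} produce binding functions that do not seem to specialize to $f(2)=3$. I therefore expect the crux to be a new local-to-global coloring step---perhaps a Gr\"otzsch-style discharging argument adapted to the topological disk arising in the BFS analysis---that promotes a bipartition of every small ball into a consistent 3-coloring of the whole graph.
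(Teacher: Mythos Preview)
This statement is a \emph{conjecture}, not a theorem: the paper explicitly poses it as open and offers no proof. There is therefore nothing in the paper to compare your attempt against. Your write-up is not a proof either, and to your credit you do not pretend otherwise---you outline a research programme (iterate the disk-restriction BFS, or equivalently establish effective $r$-controllability of string graphs with binding function satisfying $f(2)=3$) and then correctly name the central obstruction yourself: no known argument drives the chromatic bound below the~8 of \cref{thm:coloring}, let alone to~3, and the existing controllability results for string graphs~\cite{Chudnovsky21} do not yield $f(2)=3$.

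One concrete gap worth flagging in the iterated-BFS plan: the scheme behind \cref{thm:coloring} uses two \emph{disjoint} palettes, one per BFS-layer parity, and the~8 arises as $2\times 4$. Reaching~3 colors globally cannot be achieved by shrinking each palette, since two disjoint non-empty palettes already force at least~2+2=4 colors unless one layer parity is an independent set. Hence ``synchronizing the bipartitions across all BFS layers into a single global 3-coloring'' is not a refinement of the existing argument but a genuinely new step for which you propose no mechanism. The Gr\"otzsch-style suggestion is a plausible direction to explore, but as written this is (appropriately) a sketch of where a proof might come from, not a proof, and the paper makes no stronger claim.
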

A~stronger form of~\cref{conj:3col} is that it holds for $k=7$.

\section{Definitions and preparatory lemmas}

If $i \leqslant j$ are two non-negative integers, we denote by $[i,j]$ the set $\{i,i+1,\ldots,j-1,j\}$, and $[i]$ is a~short-hand for $[1,i]$.

\subparagraph{Graphs and odd cycles}

We denote by $V(G)$ and $E(G)$ the set of vertices and edges of a graph $G$, respectively.
A~graph $H$ is an~\emph{induced subgraph} (resp.~\emph{subgraph}) of a~graph $G$ if $H$ can be obtained from $G$ by vertex deletions (resp.~by vertex and edge deletions).
For $S \subseteq V(G)$, the \emph{subgraph of $G$ induced by $S$}, denoted $G[S]$, is obtained by removing from $G$ all the vertices that are not in $S$.
Then $G-S$ is a short-hand for $G[V(G)\setminus S]$.
A~set $X \subseteq V(G)$ is connected (in $G$) if $G[X]$ has a~single connected component. 

If $C$ is a~cycle, once a~direction along the cycle is fixed, we may denote by $C[x \rightarrow y]$ the subpath of $C$ from $x \in V(C)$ to $y \in V(C)$, turning in this direction. 

\begin{lemma}\label{lem:odd-cheese}
  Let $G$ be a~graph that has an induced odd cycle $C$, and let $v \in V(G) \setminus V(C)$ have at least two neighbors in $V(C)$.
  Then there is a~subpath of $C$ that forms with $v$ an induced odd cycle of~$G$.  
\end{lemma}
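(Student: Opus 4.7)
The plan is a short parity argument. Let $u_1, u_2, \ldots, u_k$ with $k\geq 2$ be the neighbors of $v$ on $C$ listed in cyclic order along $C$, and let $A_i$ denote the subpath of $C$ from $u_i$ to $u_{i+1}$ (indices mod $k$), so that $A_1, \ldots, A_k$ partition the edges of $C$. Write $\ell_i$ for the number of edges of $A_i$.

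Since $C$ is an odd cycle, $\sum_{i=1}^k \ell_i = |E(C)|$ is odd, and therefore at least one $\ell_i$ is odd. Fix such an index $i$ and consider the closed walk $C_i$ obtained by appending to the path $A_i$ the edges $u_{i+1}v$ and $vu_i$. Its length is $\ell_i+2$, which is odd; in particular $C_i$ has at least three vertices and is a genuine cycle.

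It remains to verify that $C_i$ is \emph{induced} in $G$. The path $A_i$ is an induced subgraph of $G$, being a subpath of the induced cycle $C$. The vertex $v$ lies outside $C$, so it is distinct from all vertices of $A_i$; and by hypothesis the only neighbors of $v$ in $V(C)$ are the $u_j$'s, among which only $u_i$ and $u_{i+1}$ belong to $V(A_i)$. Hence the only edges of $G$ between $v$ and $V(A_i)$ are $vu_i$ and $vu_{i+1}$, and no chords appear in $C_i$. Thus $A_i$ together with $v$ forms an induced odd cycle of $G$, as required.

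The only subtlety worth highlighting is that the argument uses two separate properties: the parity argument to locate an odd-length arc, and the assumption that $C$ is induced plus the fact that $N_G(v)\cap V(C)=\{u_1,\ldots,u_k\}$ to rule out chords. Neither step presents a genuine obstacle.
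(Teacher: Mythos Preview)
Your proof is correct and follows essentially the same approach as the paper's: partition the edges of $C$ into arcs between consecutive neighbors of $v$, use the parity of $|E(C)|$ to find an odd arc, and attach $v$ to its endpoints. You additionally spell out why the resulting cycle is induced, which the paper leaves implicit; otherwise the arguments are identical.
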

\begin{proof}
  Let $v_1, \ldots, v_h$ list the neighbors of $v$ on $C$, starting at an arbitrary fixed vertex and turning in a fixed (arbitrary) direction.
  The cycles $C_1, \ldots, C_h$, each with vertex set $\{v\} \cup C[v_i \rightarrow v_{i \pmod h + 1}]$ for $i \in [h]$, respectively, have a combined number of edges of $|V(C)|+2h$; an odd number since $|V(C)|$ is odd.
  Hence at least one of $C_1, \ldots, C_h$ has odd length.
\end{proof}

Let $G$ be a graph and $v \in V(G)$.
We denote by $N_G(v)$ and $N_G[v]$, the open, respectively closed, neighborhood of~$v$ in~$G$.
For a non-negative integer $r$, we denote by $N_G^{\leqslant r}(v)$ (resp.~$N_G^{= r}(v)$) the set of vertices of $G$ at distance at~most~$r$ (resp.~exactly~$r$) from $v$.
We may call $N_G^{\leqslant r}(v)$ the ball of radius $r$ around~$v$ (in $G$).
We will rely on the observation that if a~possibly long odd cycle is contained in a~ball of small radius, then there is also a~short odd cycle (in this ball).

\begin{lemma}\label{lem:odd-cycle-in-small-ball}
  Let $G$ be a~graph, $v \in V(G)$, and $r$ be a~positive integer.
  Suppose $G[N_G^{\leqslant r}(v)]$ contains an odd cycle~$C$.
  Then $G$ (even $G[N_G^{\leqslant r}(v)]$) has an odd cycle of length at~most~$2r+1$.
  Furthermore, if $N_G^{=r}(v) \cap V(C)$ is an independent set of~$G$, then $G$ (even $G[N_G^{\leqslant r-1}(v)]$) has an odd cycle of length at~most~$2r-1$.
\end{lemma}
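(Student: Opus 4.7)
The plan is to exploit the BFS layering from $v$. Set $L_i := N_G^{=i}(v)$ for $i = 0, 1, \ldots, r$; by the standard BFS property, every edge of $G$ has its two endpoints either in the same $L_i$ or in consecutive layers $L_{i}, L_{i+1}$. Walking once around the odd cycle $C$, the layer index changes by $0$ or $\pm 1$ per edge and returns to its starting value, so the numbers of ``up'' and ``down'' steps agree. Consequently, the number of \emph{flat} edges of $C$ (i.e., edges with both endpoints in a single $L_i$) has the same parity as $|V(C)|$, hence is odd; in particular, at least one flat edge of $C$ exists.

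To establish the first statement, I would pick any flat edge $xy$ of $C$ with $x, y \in L_i$ for some $i \leqslant r$, fix a shortest $v$-to-$x$ path $P_x$ and a shortest $v$-to-$y$ path $P_y$ in $G$ (each of length $i$), and let $w \in L_j$ be their last common vertex read from $v$. Since on any shortest path out of $v$ the $k$-th vertex lies in $L_k$, the tails $P_x[w \rightarrow x]$ and $P_y[w \rightarrow y]$ have equal length $i-j$ and can share only the vertex $w$ (a later common vertex would contradict maximality of $j$, and $x \neq y$ both sit in $L_i$). Closing them with the edge $xy$ therefore produces a simple odd cycle of length $2(i-j)+1 \leqslant 2r+1$, entirely contained in $L_0 \cup \cdots \cup L_i \subseteq N_G^{\leqslant r}(v)$.

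For the ``furthermore'' clause, the assumption that $V(C) \cap N_G^{=r}(v)$ is independent in $G$ rules out any flat edge of $C$ at level $r$, since such an edge would join two mutually non-adjacent vertices. Combined with the parity count above, this forces $C$ to carry a flat edge at some level $i \leqslant r-1$, and re-running the previous construction for that edge yields an odd cycle of length at most $2(r-1)+1 = 2r-1$, entirely contained in $G[N_G^{\leqslant r-1}(v)]$.

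I do not anticipate a serious obstacle; the one point that deserves a line of justification is that $P_x[w \rightarrow x]$ and $P_y[w \rightarrow y]$ really are internally disjoint and produce a simple cycle, but this is immediate from the choice of $w$ together with the observation that the $k$-th vertex of any shortest path from $v$ is pinned to $L_k$.
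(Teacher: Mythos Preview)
Your proof is correct and follows essentially the same approach as the paper: locate an intra-layer edge and close it up with two shortest-path tails meeting at their last common vertex (the paper phrases this as the lowest common ancestor in a fixed BFS tree, which is the same device). The only cosmetic difference is that you make the parity count of flat edges explicit from the start, whereas the paper argues the first part via ``otherwise bipartite'' and leaves the parity reasoning implicit in the furthermore clause.
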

\begin{proof}
  Perform a~breadth-first search (BFS) starting at~$v$ in $G[N_G^{\leqslant r}(v)]$.
  By definition, this breadth-first search has exactly $r+1$ layers $L_0, L_1, \ldots, L_r$, with $L_i := N_G^{=i}(v)$ for each $i \in [0,r]$.
  At least one $L_i$ is \emph{not} an independent set, otherwise $G[N_G^{\leqslant r}(v)]$ would be bipartite, and would not contain any odd cycle.
  Say $x,y \in L_i$ are adjacent, and let $z \in L_{i'}$ be their lowest common ancestor in the BFS tree (possibly $z = v$ and $i' = 0$).
  Then, there is a~cycle of length $2(i-i')+1 \leqslant 2r+1$ in $G[N_G^{\leqslant r}(v)]$, consisting of the edge $xy$ and the paths from $z$ to $x$ and from 	$z$ to $y$ in the BFS tree. 
  Furthermore, if $N_G^{=r}(v) \cap V(C)$ is an independent set of $G$, then the odd cycle $C$ has at least one of its edges with both endpoints in some $L_i \neq L_r$, which makes an odd cycle of length at~most~$2r-1$.
\end{proof}

\subparagraph{String representations, and some useful notions and properties}

A~\emph{string} is a~subset of $\mathbb R^2$ homeomorphic to a~closed segment.
We may call \emph{substring} of a~string $s$ a~path-connected subset of $s$.
If $R$ is a~collection of strings in the plane, we denote by $G_R$ their intersection graph, with one vertex per string, and an edge between every pair of intersecting strings.
For any vertex $v \in V(G_R)$, we denote by $s_R(v)$ the string \emph{representing} $v$ in $R$. 
Conversely, we may denote by $v_R(s)$ the vertex represented by the string~$s$.
We typically drop the subscript when $R$ is clear from the context.
If $X$ is an induced subgraph of $G_R$, we may denote by $R[X]$ the subset of strings of $R$ corresponding to vertices of~$X$. 

A~\emph{face} of a~geometric arrangement made by a collection $\mathcal S$ of strings, or more generally of curves, in $\mathbb R^2$ is a~connected component of~$\mathbb R^2 \setminus \bigcup_{s \in \mathcal{S}} s$. 
A~\emph{closed face} is the topological closure of a~face.
If $D$ is a~topological disk, and more generally a~face, we denote by $\partial D$ its boundary.
For instance, the curve $\partial D$ defines two faces, one of which is finite ($D \setminus \partial D$), and admits $D$ as a~closed face.
We may denote by $\overline F$ the closure of an (open) face~$F$. 

If $P$ is an~induced path represented by strings $s_1, \ldots, s_h$ (i.e., $s_i$ and $s_j$ intersect whenever $|i-j|=1$), $a \in s_1$ and $b \in s_h$, we denote by $s[a, P, b]$ a~minimal path-connected subset of $\bigcup_{i \in [h]} s_i$ containing $a$ and $b$.
In particular $s[a, P, b]$ is a~string with endpoints $a$ and $b$, coinciding with each $s_i$ on a~substring of positive length.
Although $s[a, P, b]$ is not necessarily uniquely defined, every further claim will hold regardless of the actual choice for $s[a, P, b]$.
It can thus be thought as a~short-hand for: \emph{fix any minimal path-connected subset of $\bigcup_{i \in [h]} s_i$ containing $a$ and $b$, and denote it $s[a, P, b]$}.

\begin{figure}[h!]
\centering
\begin{tikzpicture}

  \begin{scope}[scale=0.85]
   % Define string intersections (v1 to v9)
  \foreach \i/\x/\y in {1/0/0, 2/1.5/1, 3/2.4/0.9, 4/2.8/0.95, 5/3.15/0.9, 6/4.1/0.8, 7/4.6/0.75, 8/5.6/0.6, 9/6.7/0}{
    \coordinate (v\i) at (\x,\y);
  }

  \node at (0,-0.25) {$a$};
  \node at (6.7,-0.25) {$b$};
  \node at (3.3,0.2) {$P$};

  % String s1
  \draw[thick, purple!80!white] (v1) .. controls (0.6,0.8) .. (v2) .. controls (1.4,1.5) and (2,1) .. (2.5,1.7);

  % String s2
  \draw[thick, blue] (1.2,1.6) to [bend left=-15] (v2) .. controls (1.8,0.7) .. (v3) .. controls (2.7,1.4) .. (v4) .. controls (3,0.8) .. (v5) to [bend right=20] (3.9,1.4) ;

  % String s3
  \draw[thick, green!70!black] (2.2,1) to [bend left=-15] (v3) .. controls (2.6,0.8) .. (v4) .. controls (3,1.4) .. (v5) .. controls (3.4,0.7) .. (3.7,0.85) .. controls (3.9,1) .. (v6) .. controls (4.2,0.6) .. (v7) to [bend left=-15] (4.9,1) ;

  % String s4
  \draw[thick, orange] (4,0.7) to [bend left=25] (v6) .. controls (4.8,0.2) .. (v7) .. controls (4.6,1.5) .. (v8) to [bend left=-15] (5.9,0.45);

  % String s5
  \draw[thick] (5.1,0.9) to [bend left=-25] (5.2,0.7) to [bend left=-20] (v8) .. controls (6.3,0.5) .. (v9);

  % Draw vertices
  \foreach \i/\x/\y in {1/0/0, 2/1.5/1, 3/2.4/0.9, 4/2.8/0.95, 5/3.15/0.9, 6/4.1/0.8, 7/4.6/0.75, 8/5.6/0.6, 9/6.7/0}{
    \fill (\x,\y) circle (1pt);
  }
  \end{scope}
  
  \begin{scope}[scale=0.85, xshift=8.5cm]
  % Define string intersections (v1 to v9)
  \foreach \i/\x/\y in {1/0/0, 2/1.5/1, 3/2.4/0.9, 4/2.8/0.95, 5/3.15/0.9, 6/4.1/0.8, 7/4.6/0.75, 8/5.6/0.6, 9/6.7/0}{
    \coordinate (v\i) at (\x,\y);
  }

  \node at (0,-0.25) {$a$};
  \node at (6.7,-0.25) {$b$};
  \node at (3.3,0.2) {\textcolor{red}{$s[a,P,b]$}};

   % String s1
  \draw[thick, purple!80!white] (v1) .. controls (0.6,0.8) .. (v2) .. controls (1.4,1.5) and (2,1) .. (2.5,1.7);
  \draw[very thick, red] (v1) .. controls (0.6,0.8) .. (v2) ;

  % String s2
  \draw[thick, blue] (1.2,1.6) to [bend left=-15] (v2) .. controls (1.8,0.7) .. (v3) .. controls (2.7,1.4) .. (v4) .. controls (3,0.8) .. (v5) to [bend right=20] (3.9,1.4) ;
  \draw[very thick, red] (v2) .. controls (1.8,0.7) .. (v3) .. controls (2.7,1.4) .. (v4) .. controls (3,0.8) .. (v5) ;

  % String s3
  \draw[thick, green!70!black] (2.2,1) to [bend left=-15] (v3) .. controls (2.6,0.8) .. (v4) .. controls (3,1.4) .. (v5) .. controls (3.4,0.7) .. (3.7,0.85) .. controls (3.9,1) .. (v6) .. controls (4.2,0.6) .. (v7) to [bend left=-15] (4.9,1) ;
  \draw[very thick, red] (v5) .. controls (3.4,0.7) .. (3.7,0.85) .. controls (3.9,1) .. (v6) .. controls (4.2,0.6) .. (v7) ;

  % String s4
  \draw[thick, orange] (4,0.7) to [bend left=25] (v6) .. controls (4.8,0.2) .. (v7) .. controls (4.6,1.5) .. (v8) to [bend left=-15] (5.9,0.45);
  \draw[very thick, red] (v7) .. controls (4.6,1.5) .. (v8) ;
  
  % String s5
  \draw[thick] (5.1,0.9) to [bend left=-25] (5.2,0.7) to [bend left=-20] (v8) .. controls (6.3,0.5) .. (v9);
  \draw[very thick, red] (v8) .. controls (6.3,0.5) .. (v9);

  % Draw vertices
  \foreach \i/\x/\y in {1/0/0, 2/1.5/1, 3/2.4/0.9, 4/2.8/0.95, 5/3.15/0.9, 6/4.1/0.8, 7/4.6/0.75, 8/5.6/0.6, 9/6.7/0}{
    \fill (\x,\y) circle (1pt);
  }
  \end{scope}
\end{tikzpicture}
\caption{Left: 5-vertex path as the intersection graph of strings, where the first string of the path contains point~$a$, and the last string of the path contains point~$b$.
Right: illustration of $s[a,P,b]$.}
\label{fig:s[a,P,b]}
\end{figure}
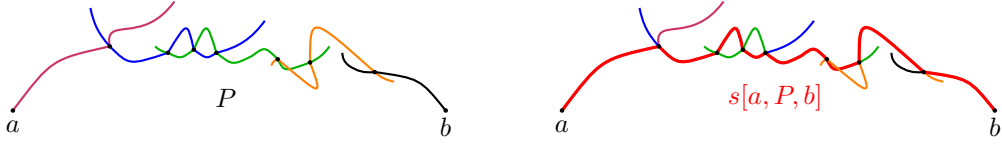

It will be convenient to extend the latter notation to induced cycles, that is, to allow $s_1$ and $s_h$ to be adjacent.
If $P$ may be a~cycle, then $s[a, P, b]$ is defined more specifically in the case when $s_1$ and $s_h$ indeed intersect.
If there is a~point $c \in s_1 \cap s_h$ and two substrings $s'_1 \subseteq s_1, s'_h \subseteq s_h$ with endpoints $a, c$ and $b, c$, respectively, such that $(s'_1 \cup s'_h) \cap \bigcup_{2 \leqslant i \leqslant h-1} s_i = \emptyset$, then $s[a, P, b] := s'_1 \cup s'_h$.
If not, then there is a~(\emph{non-self-intersecting}) string that starts at~$a$, follows non-empty substrings of~$s_1, \ldots, s_h$ in this order, and ends at~$b$.
We fix, as before, $s[a, P, b]$ to be any such string.
    
We will need the following lemma.

\begin{lemma}\label{lem:cell}
  Let $D$ be a~topological disk in the plane.
  Let $V \uplus W$ be a set of strings all contained in $D$, whose intersection graph is a cycle~$C$, such that
  \begin{compactitem} 
    \item $|W| \geqslant 2$, 
    \item every string of $W$ has one or both endpoints in $\partial D$,
    \item no other point of a~string of~$V \cup W$ intersects $\partial D$, and
    \item no two strings of $(V \cup) W$ intersect at a~point of $\partial D$.
  \end{compactitem}
  Let $s_p$ be a~string contained in~$D$ with at~least one endpoint $p$ in $\partial D$, such that $s_p$ does not intersect any string of $V \cup W$.  

  Then there are two strings $s \neq s' \in W$, and $h$ strings $s_1, \ldots, s_h \in V$ with $v(s), v(s_1), \ldots,$ $v(s_h), v(s')$ a~subpath of~$C$ (or $C$ itself), and $s_p$ is contained in a~closed face $\overline F$ of the arrangement $\{\partial D, s, s_1, \ldots, s_h, s'\}$ such that $\overline F$ does not intersect any string of $V \cup W$ outside of $s, s_1, \ldots, s_h, s'$, the string $s_{-1} \neq s_1 \in V \cup W$ intersecting $s$, and the string $s_{h+2} \neq s_h \in V \cup W$ intersecting~$s'$.  
\end{lemma}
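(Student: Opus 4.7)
The plan is to identify the subpath by tracing the boundary of the face containing $s_p$ in the full arrangement, and then verify the required separation property in the reduced arrangement.

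First I would take $F_0$ to be the face of $\mathcal A := \partial D \cup \bigcup_{t \in V \cup W} t$ that contains $s_p$; this is well-defined since $s_p$ is disjoint from every string of $V \cup W$. Because $p \in \partial D$ is an endpoint of $s_p$, the boundary $\partial F_0$ contains an arc $A$ of $\partial D$ through $p$. I would trace the connected component of $\partial F_0$ containing $A$, starting just after $A$ in one direction. Two key observations: a transition between two strings happens only at an intersection point, so consecutive strings in the trace are $C$-adjacent; and a transition between a string and $\partial D$ happens only at a $\partial D$-endpoint of that string, so the transitioning string must lie in $W$. Thus, following the trace until it next hits a $\partial D$-arc, I obtain a sequence $s = t_0, t_1, \ldots, t_k = s'$ with $s, s' \in W$ forming a walk in the cycle $C$. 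Since any walk in a cycle visits a contiguous arc of vertices, these strings form either all of $C$ or a proper subpath.

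Next I would ensure that the intermediate strings lie in $V$: if some $t_i$ with $0 < i < k$ lies in $W$, I can use $t_i$'s $\partial D$-endpoint to shorten the walk so that $t_i$ becomes the new endpoint; iterating yields a subpath $s, s_1, \ldots, s_h, s'$ of $C$ with $s, s' \in W$ and $s_1, \ldots, s_h \in V$.

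Finally, I would verify the face property. In the reduced arrangement $\mathcal A' := \{\partial D, s, s_1, \ldots, s_h, s'\}$, removing strings can only merge faces, so the face $F$ of $\mathcal A'$ containing $s_p$ is a superset of $F_0$, with boundary made up of an arc of $\partial D$ together with pieces of $s, s_1, \ldots, s_h, s'$. Any string $t \in V \cup W$ outside the subpath that intersects $\overline F$ must cross this boundary by being $C$-adjacent to some $s_i$; by the construction of the walk, the only such $C$-neighbors lying outside the subpath are $s_{-1}$ (the $C$-neighbor of $s$ not in the subpath) and $s_{h+2}$ (the $C$-neighbor of $s'$ not in the subpath), giving the required conclusion. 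The main obstacle I anticipate is the topological bookkeeping: arguing rigorously that tracing $\partial F_0$ yields a well-defined walk in $C$ even when $F_0$ is multiply connected or $\partial F_0$ contains several $\partial D$-arcs, checking that the shortening procedure preserves the containment $s_p \subseteq \overline{F}$, and ruling out that a string of $C$ far from the chosen subpath could sneak into $\overline F$ through the reduced arrangement.
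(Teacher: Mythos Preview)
Your approach and the paper's share the iterative shortening step: once a subpath of $C$ with $W$-endpoints has been found with $s_p$ on the ``inside'', and some internal vertex of the subpath lies in $W$, replace one endpoint by that vertex and keep the correct half. The difference lies in how the initial subpath is obtained and, more importantly, in how the separation property is established and maintained.

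The paper does not trace the face of $s_p$ in the full arrangement. Instead it picks \emph{any} two strings $t \neq t' \in W$ with $\partial D$-points $q,q'$, takes the two $v(t)$--$v(t')$ paths $P_1, P_2$ in $C$, and for each forms a single Jordan arc $s[q,P_j,q']$ (a construction defined earlier in the paper: a minimal connected subset of $\bigcup_{u\in P_j} s(u)$ joining $q$ to $q'$). Each such arc together with an arc of $\partial D$ cuts $D$ into exactly two simply-connected faces. Because $C$ is induced, the strings of $V(C)\setminus N_C[P_j]$ form a connected set disjoint from the arc, hence lie entirely in one of the two faces; $s_p$ lies in one of them as well, and one checks which combination occurs. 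The shortening step then uses the cyclic order of $q,p,q''$ along $\partial D$ to decide which endpoint to update. This Jordan-arc device is exactly what dissolves the three obstacles you anticipate: there is no face-tracing, the ``which half still contains $s_p$'' test is explicit, and the separation property follows at once from the two-face decomposition plus connectedness of the complementary subpath.

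Your verification paragraph, by contrast, has a genuine gap as written. You argue that any $t \in V \cup W$ outside the chosen subpath meeting $\overline F$ ``must cross this boundary by being $C$-adjacent to some $s_i$''. But a string $t$ could lie entirely in the open face $F$ without touching $\partial F$ at all (this is precisely your ``sneak into $\overline F$'' worry, and nothing in the argument excludes it), and a $W$-string $t$ outside the subpath could meet $\partial F$ only at its own $\partial D$-endpoint, so $C$-adjacency to some $s_i$ is not forced either. The paper's argument closes both holes at once: the potentially offending strings form a connected set disjoint from the Jordan arc, so they are confined to the far side. If you graft the $s[q,P,q']$ construction onto your proof after producing the subpath, your argument goes through; without it, the final step is unjustified.
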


\begin{proof}
  Let $t, t'$ be two distinct strings of $W$, with $q \in t \cap \partial D$ and $q' \in t' \cap \partial D$.
  Let $P_1, P_2$ be the two paths from $v(t)$ to $v(t')$ in~$C$.
  We define the strings $t_1 := s[q, P_1, q']$ and $t_2 := s[q, P_2, q']$.
  We denote by $\langle q,q' \rangle$ and $\langle q',q \rangle$ the two minimal path-connected subsets of $\partial D$ containing $q$ and $q'$.
  Let $F_1, F'_1$ be the two finite (open) faces with boundary $\langle q,q' \rangle \cup t_1$ and $\langle q',q \rangle \cup t_1$, respectively.
  Similarly let $F_2, F'_2$ be the two finite faces with boundary $\langle q,q' \rangle \cup t_2$ and $\langle q',q \rangle \cup t_2$, respectively.

\begin{figure}[h!]
\centering
\begin{tikzpicture}
% Jordan curve 
  \draw[thick] plot [smooth cycle, tension=1] coordinates {(10,2.5)(9,3.5)(8,4.5)(6,4.9)(4,4.5)(2,4.1)(1,3.1)(0,2.5)(0.4,1.5)(2,0.5)(4,0.1)(6,0.3)(8,0.9)(9.4,1.5)};
  \node at (3,-0.1) {$\partial D$} ;

% strings in V
\foreach \i in {
  {(7.3,2.4) (7.5,3.5) (6.5,3)},         
  {(7.2,3.4) (5.5,3.8) (4.5,3.5)},                     
  {(5,3) (3,3.4) (2.3,3)},             
  {(4,4.1) (2.8,4) (2.5,2.5) (1,2.6) (0.8,2)}}{     
  \draw[thick] plot [smooth, tension=1] coordinates {\i};
}

\foreach \i in {
  {(1.6,1.1) (2,1.3) (4,1.1) (5,1.9)},
  {(4,2) (5,1) (6,1.5)},
  {(4.5,0.7) (6,1.7) (7,1) (7.5,2)},
  {(6.4,1.3) (8,1.5) (8.4,3) (8.6,1.6) (8,1.3)}}{ 
  \draw[thick,black!30!green] plot [smooth, tension=1] coordinates {\i};
}

\foreach \i in {
  {(9.2,2.2) (9.6,2.4) (8,3.5) (7,3)},
  {(0.4,2.4) (2.5,1.5) (1.2,1.8)}}{ 
  \draw[thick,purple] plot [smooth, tension=1] coordinates {\i};
}

% strings in W
\foreach \i in {
  {(1.15,0.88) (1.5,1.5) (1.8,2) (2.7,1) (3,2) (3.8,1) (4,1) (3,2.4) (2.55,1.3) (1.85,2.2) (1.2,1.65) (0.8,1.1)},
  {(8.2,1.4) (9,3) (9.5,2) (9.81,1.82)},  
  {(6,4.9) (5,3.5) (4.5,4) (4,2.5) (3,3.5)}}{ 
  \draw[thick,blue] plot [smooth, tension=1] coordinates {\i};
}

\node at (0.8,0.9) {$q$} ;
\node at (6.05,5.15) {$q'$} ;
\node at (9.8,1.58) {$q''$} ;

\node at (0.92,1.55) {\textcolor{blue}{$t$}} ;
\node at (5.9,4.4) {\textcolor{blue}{$t'$}} ;
\node at (9.3,1.8) {\textcolor{blue}{$t''$}} ;

% s_p
\draw[thick, red] plot [smooth, tension=1] coordinates {(5,0.1) (5.4,0.6) (5.8,0.45)} ;
\node at (5,-0.15) {$p$} ;
\node at (5.5,0.4) {\textcolor{red}{$s_p$}} ;

% C
\node at (5,2.5) {$C$} ;

\end{tikzpicture}
\caption{Illustration of the proof of~\cref{lem:cell}.
  The strings of $W$ are in blue.
  Say, we initially pick $t \neq t' \in W$ as depicted.
  The $v(t)$--$v(t')$ subpath of~$C$ enclosing $s_p$ contains a~string in $W$, namely~$t''$.
  As $qpq''$ turns as $qpq'$ along $\partial D$, $t'$ will be set to $t''$, and the new $P$ (green strings) is entirely in~$V$, so the process stops.
  The strings $s := t, s' := t''$, the green strings, and $\partial D$ form a~closed face containing $s_p$ that, among the strings of the rest of~$C$, only the two purple strings (may) intersect.   
}
\label{fig:cell}
\end{figure}
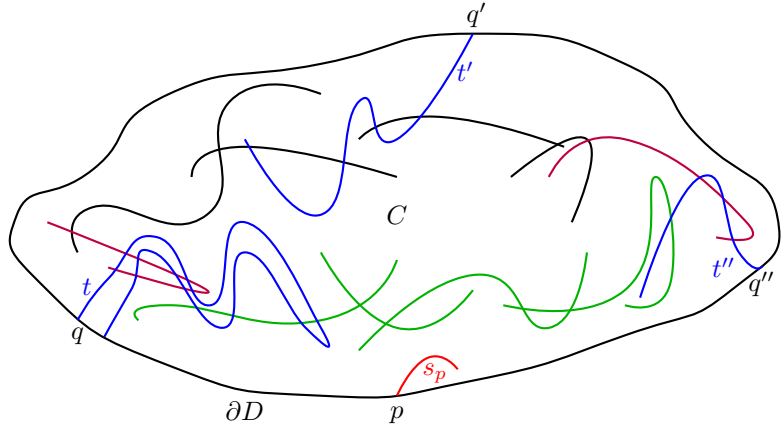

  As $C$ is an induced cycle, at~most one of $\overline{F_1}, \overline{F'_1}$ intersects the string of some vertex in $V(C) \setminus N_C[P_1]$.
  Without loss of generality, suppose that $\overline{F_1}$ does not intersect any string of $V(C) \setminus N_C[P_1]$.
  As $s_p$ does not intersect $t_1$, it is contained in the closed face $\overline{F_1}$ or in the closed face $\overline{F'_1}$.
  If $s_p$ is contained in $\overline{F_1}$, we set $P := P_1 - \{v(t),v(t')\}$ and $F := F_1$, and proceed to the next paragraph.
  Otherwise, we observe that $s_p$ is contained in $\overline{F'_2}$, and $\overline{F'_2}$ does not intersect any string of $V(C) \setminus N_C[P_2]$.
  In which case we set $P := P_2 - \{v(t),v(t')\}$ and $F := F'_2$.

  At this point, we have all the requirements of the lemma except that some strings of $P$ may be in~$W$.
  Let $P$ be $v_1, \ldots, v_\ell$ with $v_1$ a~neighbor of~$v(t)$, and $v_\ell$ a~neighbor of~$v(t')$. 
  While there is some string $t'' \in W$ with $v(t'')=v_i \in V(P)$, we let $q'' \in t'' \cap \partial D$ and distinguish two cases.
  If $q, p, q''$ turn along $\partial D$ in the same direction as $q, p, q'$, we set $t' := t''$ and $P := v_1, \ldots, v_{i-1}$.
  Otherwise we set $t := t''$ and $P := v_{i+1}, \ldots, v_\ell$.
  (Note that $|V(P)|$ decreases by at~least one in each case.) 
  When we exit the while loop, $s := t, s' := t' \in W$, and the subpath $P$ of $C$ (now only made of strings of~$V$) satisfy the lemma statement.
  A~visual recap is provided by~\cref{fig:cell}.
\end{proof}

\section{String graphs of odd girth larger than $11$ are $8$-colorable}

This section is devoted to the proof of \cref{thm:coloring}.
Let $G$ be a~string graph of odd girth strictly greater than 11, and $R$ be a~string representation of~$G$.
We may color each connected component of~$G$ independently, thus assume that $G$ is connected.
Let $u_0 \in V(G)$ be an arbitrary vertex. 
For each $i \geq 0$, let $L_i := N_G^{=i}(u_0) \subseteq V(G)$ be the set of vertices at distance exactly $i$ from $u_0$.
Note that $\{u_0\} = L_0, L_1, L_2, \ldots$ partition $V(G)$, and that there may be an edge between $x \in L_i$ and $y \in L_j$ only if $|i-j| \leqslant 1$.

Our goal is to 4-color $G[L_i]$ for each non-negative integer $i$.
We are then done, since we can use two disjoint 4-color palettes for $G[\bigcup_{i~\text{is odd}} L_i]$ and $G[\bigcup_{i~\text{is even}} L_i]$.
We fix $i$, and assume that $i \geqslant 2$ since $L_0$ is a~singleton, and $L_1$ is an independent set, due to the condition on the odd girth.
Note that it is sufficient to 4-color each connected component of~$G[L_i]$.
Let $X$ be the vertex set of any connected component of~$G[L_i]$.

\subparagraph{Definition of the topological disk $\bm{D}$.}
Let $R[X]$ be the string representation of $G[X]$ obtained by keeping the strings of $R$ corresponding to vertices of~$X$.
Let $D_0$ be the topological disk whose boundary is the boundary of the infinite face in the arrangement $R[X]$.
If $s(u_0)$ is contained in $D_0$, we draw $R$ on a~sphere, place any free point (i.e., not occupied by a~string of~$R$) of the face containing $s(u_0)$ in $R[X]$ at its north pole, and consider the stereographic projection of this string representation.
The latter is such that $s(u_0)$ is on the infinite face of the string arrangement~$R[X]$.
Thus we can in fact assume that $s(u_0)$ is not contained in $D_0$.

Note that every string of $R[X]$ is contained in $D_0$, and apart from those of $L_{i-1} \cup L_{i+1} \cup X$ no string of $R$ intersects $\partial D_0$.
Let $D \supset D_0$ be a~very slightly augmented topological disk such that $\partial D_0 \subset D \setminus \partial D$ and the property that no string outside $L_{i-1} \cup L_{i+1} \cup X$ intersects $\partial D$ is preserved.
We can also ensure that no intersection of two strings of $R$ lies on $\partial D$, and, for any point $q \in \partial D$, no string of $R$ intersects $\partial D$ at $q$ without crossing it at $q$.
We observe that every string of $L_{i-1}$ with a~neighbor in $V(X)$ intersects $\partial D$, and that every vertex of $X$ has a~neighbor in $L_{i-1}$.

\subparagraph{Definition of the auxiliary graph~$\bm{H}$.}
Let $R_H$ be the set of strings formed by $D \cap R[(L_{i-1} \cap N_G(V(X))) \cup V(X)]$, and $H$ its intersection graph.
Every vertex of $X$ corresponds to a~single vertex in $H$, whereas each vertex of $L_{i-1} \cap N_G(V(X))$ may split in several strings in $H$ as the corresponding string in $R$ may enter and exit $D$ several times.
However, the strings of $R_H$ that correspond to the~same vertex in $L_{i-1} \cap N_G(V(X))$ are pairwise non-intersecting and thus they form an independent set in~$H$.
Fix an arbitrary vertex $w \in L_{i-1} \cap N_G(V(X))$.
Let $X_j$ be the subset of vertices of $X$ at distance exactly $j$ from~$w$ in~$H$.
Again, $X_1, X_2, \ldots$ partition $X$.
By the previous remarks, it is sufficient to show that that for any positive integer $k$, the graph $G[X_k]=H[X_k]$ is bipartite. 
We fix $k$, and show this fact in the next section.

\subsection{$G[X_k]$ is bipartite}

For the sake of contradiction assume that there is an induced odd cycle $C$ in $G[X_k]$.
We denote by $v_1, \ldots, v_h$ the vertices of $C$.
For every $\ell \in [h]$, we fix some $w_\ell \in N_H(v_\ell) \setminus X$.
Such a~vertex exists since every vertex of $L_i$ has at~least one neighbor in $L_{i-1}$ (possibly split into several strings in $R_H$, at least one of which intersects $s(v_\ell)$).
Observe that $w_1, \ldots, w_h$ are not necessarily pairwise distinct.
We set $W := \{w_1, \ldots, w_h\}$, with $1 \leqslant |W| \leqslant h$.
We say that string $s(v_i)$ (and, by extension, vertex $v_i$) is \emph{simply attached} in a~cycle $C'$ containing $v_i$ if $w_i$ has only one neighbor in $V(C')$, namely~$v_i$. 
We denote by $r(w_i)$ one (arbitrary, if both endpoints are in $\partial D$) endpoint of $w_i$ that is in $\partial D$.

Our plan is to show that there is an odd cycle $C'$ contained in the ball of radius~6 around some vertex $z$ in $H$ such that $N_H^{=6}(z) \cap V(C')$ is an independent set.
This implies, by~\cref{lem:odd-cycle-in-small-ball}, the existence of an odd cycle of length at most 11 in $H$, which in turn implies, as we will see, that the same happens in~$G$.
To do this, we exhibit a~path $\widehat P$ in $H[V(C) \cup W]$ on at~most four vertices, whose strings define with $\partial D$ a~(finite) closed face $F$ containing~$s(w)$ (recall that $w$ is the arbitrary BFS root in~$H$), and an odd cycle $C'$ in $H[V(C) \cup W]$ ``mostly'' contained in $D \setminus F$.
We then show that $z$ can be chosen among the vertices of~$\widehat P$.

Let us first establish the following lemma.

\begin{lemma}\label{lem:to-simple}
  There is an induced odd cycle $C'$ in $H$ such that $V(C') \subseteq V(C) \cup W$, such that one of the following items holds
  \begin{compactitem}
  \item $C'=C$ and every string of $V(C')$ is simply attached in $C'$, or
  \item $V(C') \cap W = \{w_i\}$, for some $w_i \in W$, and $V(C') \setminus \{w_i\}$ induces a~subpath of $C$ whose internal vertices are all simply attached in~$C'$, or
  \item $|V(C') \cap W| \geqslant 2$, and there is a~subpath $x, v_i, \ldots, v_j, y$ of~$C'$ (or $C'$) such that $x \neq y \in W$, $\{v_i, \ldots, v_j\} \subset V(C)$ may be empty, $s(v_{i+1}), \ldots, s(v_{j-1})$ are all simply attached in~$C'$, and $s(w)$ is contained in a~closed face of~$\{\partial D, s(x), s(v_i), s(v_{i+1}), \ldots, s(v_{j-1}), s(v_j), s(y)\}$ that does not intersect any string of $C'$ but $s(x), s(v_i), s(v_{i+1}), \ldots, s(v_{j-1}), s(v_j), s(y)$ and the other neighbor in $C'$ of $x$ and of $y$.
  \end{compactitem}
\end{lemma}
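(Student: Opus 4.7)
The plan is to start from $C$ itself and simplify it via repeated applications of \cref{lem:odd-cheese}, landing either in Case~1 (if $C$ itself already works), in Case~2 (if a single shortcut through one $w_i$ suffices), or, when simplification forces at least two $W$-vertices into the cycle, in Case~3 via \cref{lem:cell}. First I test the original cycle $C$: if every $w_\ell$ is simply attached in $C$ (i.e., meets $V(C)$ only at $v_\ell$), Case~1 holds. Otherwise, fix some $w_i$ with at least two neighbors on $V(C)$ and use \cref{lem:odd-cheese} to extract an induced odd cycle $C^{(1)}$ consisting of $w_i$ together with a subpath of~$C$. If the internal vertices of that subpath are simply attached in $C^{(1)}$, Case~2 holds.

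Otherwise pick an internal $v_j$ of the subpath that is not simply attached in $C^{(1)}$, and apply \cref{lem:odd-cheese} to the pair $(w_j, C^{(1)})$. Among the cycles it produces, exactly one uses the ``wrap-around'' arc through $w_i$; the others lie inside $V(C) \cup \{w_j\}$. If an odd cycle supplied by \cref{lem:odd-cheese} avoids $w_i$, then it still has the shape of Case~2 but with strictly fewer $V(C)$-vertices than $C^{(1)}$, and I iterate on it. A short case analysis (leveraging that $G$ has odd girth $>11$, so no triangle in~$H$ through $w_i$ and $w_j$ can appear, and no pair of $w_j$-neighbors on $C^{(1)}$ is $C^{(1)}$-adjacent) shows that when all ``avoiding'' cycles are even, the wrap-around cycle must be odd, and we end the iteration with a cycle $C'$ containing both $w_i$ and $w_j$. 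Each non-final pass strictly decreases $|V(C^{(t)}) \cap V(C)|$, so the process terminates.

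If we land on a $C'$ with $|V(C') \cap W| \geq 2$, I apply \cref{lem:cell} inside the disk $D$, with $V$-strings $V(C') \cap V(C)$, $W$-strings $V(C') \cap W$, and $s_p := s(w)$. Every string of $V(C') \cap W$ has an endpoint on $\partial D$ because it is a piece in $D$ of an $L_{i-1}$-string whose parent must cross $\partial D$ (since $s(u_0)$ sits outside~$D$). That $s(w)$ is disjoint from all strings of $V(C')$ follows from $d_H(w, v_\ell) = k \geq 1$ and $d_H(w, w_\ell) \geq k - 1$, which is enough whenever $k \geq 3$. The case $k = 1$ does not actually arise, because any edge of $G[X_1]$ would form a triangle in $H$ through $w$ and hence, once pieces are contracted, a forbidden short odd cycle of~$G$; for $k = 2$ one instead invokes \cref{lem:cell} on a short substring of $s(w)$ emanating from its endpoint on $\partial D$, and transfers the face conclusion back to $s(w)$ by observing that the only strings $s(w)$ may cross are already on the boundary of that face. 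The face delivered by \cref{lem:cell} then supplies the distinguished subpath $x, v_i, \ldots, v_j, y$ of~$C'$ demanded by Case~3, and the internal simple-attachment of $v_{i+1}, \ldots, v_{j-1}$ is inherited from the exhaustion of the cleanup phase.

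The main obstacle I anticipate is the bookkeeping in the cleanup iteration: one must show that every pass either strictly decreases $|V(C^{(t)}) \cap V(C)|$ (so Case~2 is eventually achieved) or outputs a cycle with two $W$-vertices already matching the simple-attachment requirement of Case~3. A secondary subtlety is verifying the geometric premise of \cref{lem:cell} when $s(w)$ may touch a $W$-string of $C'$, which forces the small-$k$ workaround sketched above.
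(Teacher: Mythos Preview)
Your overall strategy---iterate \cref{lem:odd-cheese} starting from $C$, and once the current odd cycle carries at least two $W$-vertices, invoke \cref{lem:cell} with $s_p = s(w)$---matches the paper's. The genuine gap is your final sentence: the claim that ``the internal simple-attachment of $v_{i+1}, \ldots, v_{j-1}$ is inherited from the exhaustion of the cleanup phase'' is unjustified and in general false. Your cleanup runs only while the cycle has a single $W$-vertex; at each pass you locate \emph{one} non-simply-attached internal vertex and shortcut through its attached $w$, but you never certify the remaining internal vertices. When the wrap-around step produces the two-$W$ cycle $C'$ and \cref{lem:cell} selects one of its two $V(C)$-arcs, the internal vertices of that arc form an arbitrary sub-interval of the last single-$W$ subpath, and nothing in your iteration guarantees they are simply attached in~$C'$. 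Simple attachment is not even monotone along your passes: inserting $w_j$ into the cycle can give some $w_\ell$ (with $\ell \neq j$) a new neighbor in $V(C')$, namely $w_j$ itself, so a vertex simply attached in $C^{(t)}$ need not be simply attached in~$C'$.

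The paper fixes this by not halting the iteration once $|V(C_p)\cap W|\geq 2$. At every step it tests whether $C_p$ already satisfies one of the three outcomes; in the multi-$W$ case this means applying \cref{lem:cell} to locate the facing subpath and then checking its internal vertices for simple attachment. If some internal $v_{i'}$ fails, one applies \cref{lem:odd-cheese} to $(C_p, w_{i'})$ and obtains $C_{p+1}$ with $|V(C_{p+1})\cap V(C)| < |V(C_p)\cap V(C)|$, regardless of how many $W$-vertices $C_p$ carries. So the missing ingredient in your argument is one more loop: after landing on a multi-$W$ cycle, re-apply \cref{lem:cell}, test the third item, and if it fails repeat \cref{lem:odd-cheese} on an offending internal vertex. (Your secondary worry about $s(w)$ meeting a $W$-string when $k=2$ is legitimate---the paper glosses over it too---but it is not the main obstacle.)
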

\begin{proof}
  If all the strings of $C$ are simply attached, we are done as the first item holds.
  Otherwise there is some $w_i \in W$ with at least two neighbors in $V(C)$.
  By~\cref{lem:odd-cheese}, there is an induced odd cycle $C_1$ such that $V(C_1)$ comprises $w_i$ and a~subpath of $C$. 

Let $p \geq 1$, and suppose we have defined $C_p$.
If $C_p$ satisfies the second or the third item of the lemma, we are done; so we assume otherwise.
We will show that we can find an induced cycle $C_{p+1}$ in $H[V(C) \cup W]$ with fewer vertices in $V(C)$ than $C_p$ has.

First, suppose that $V(C_p) \cap W$ is a~singleton~$\{x\}$.
As $C_p$ does not satisfy the second outcome, there is a~vertex $v_i \in V(C_p) \setminus \{x\} \subseteq V(C)$ that is \emph{not} simply attached nor adjacent to $x$ in $C_p$.
We then obtain $C_{p+1}$ by applying~\cref{lem:odd-cheese} to the pair $C_p, w_i$.

Suppose instead that $C_p$ has at least two vertices in $W$.
By~\cref{lem:cell}, $s(w)$ is contained in a~closed face~$F$ of the arrangement $\{\partial D, s(x), s(v_i), s(v_{i+1}), \ldots,$ $s(v_{j-1}), s(v_j), s(y)\}$ with $x \neq y \in W$ such that $F$ does not intersect any string of $C'$ outside $s(x'), s(x), s(v_i), s(v_{i+1}), \ldots,$ $s(v_{j-1}),$ $s(v_j), s(y), s(y')$ where $x', y'$ are the unique vertices in $N_{C'}(x) \setminus \{v_i\}, N_{C'}(y) \setminus \{v_j\}$, respectively.
As $C_p$ does not satisfy the third item, $|j-i| \geqslant 2$ and there is some $v_{i'}$ with $i+1 \leqslant i' \leqslant j-1$ such that $s(v_{i'})$ is \emph{not} simply attached in $C_p$.
We then obtain $C_{p+1}$ by applying~\cref{lem:odd-cheese} to the pair $C_p, w_{i'}$; see~\cref{fig:odd-cheese-application}.
  
In both cases, the number of vertices of $V(C)$ present in the current odd cycle drops by at least one when going from $C_p$ to $C_{p+1}$. Thus, after at most $h = |V(C)|$ iterations, the procedure will return an odd cycle $C_q$ in $H[V(C) \cup W]$ that satisfies the second or the third statement of the lemma. 
\end{proof}  
  
  \begin{figure}[h!]
\centering
\begin{tikzpicture}
% Jordan curve 
  \draw[thick] plot [smooth cycle, tension=1] coordinates {(10,2.5)(9,3.5)(8,4.5)(6,4.9)(4,4.5)(2,4.1)(1,3.1)(0,2.5)(0.4,1.5)(2,0.5)(4,0.1)(6,0.3)(8,0.9)(9.4,1.5)};
  \node at (3,-0.1) {$\partial D$} ;

% strings in V
\foreach \i in {                     
  {(5,3) (3,3.4) (2.3,3)},             
  {(4,4.1) (2.8,4) (2.5,2.5) (1,2.6) (0.8,2)},
  {(0.4,2.4) (2.5,1.5) (1.2,1.8)},
  {(1.6,1.1) (2,1.3) (4,1.1) (5,1.9)},
  {(4,2) (5,1) (6,1.5)},
  {(4.5,0.7) (6,1.7) (7,1) (7.5,2)}}{     
  \draw[thick,black] plot [smooth, tension=1] coordinates {\i};
}

\foreach \i in {
  {(7.3,2.4) (7.5,3.5) (6.5,3)},         
  {(7.2,3.4) (5.5,3.8) (4.5,3.5)},
  {(9.2,2.2) (9.6,2.4) (8,3.5) (7,3)},
  {(6.4,1.3) (8,1.5) (8.4,3) (8.6,1.6) (8,1.3)}}{     
  \draw[thick,black,opacity=0.4] plot [smooth, tension=1] coordinates {\i};
}

% strings in W
\foreach \i in {
  {(1.15,0.88) (1.5,1.5) (1.8,2) (2.7,1) (3,2) (3.8,1) (4,1) (3,2.4) (2.55,1.3) (1.85,2.2) (1.2,1.65) (0.8,1.1)},
  {(6,4.9) (5,3.5) (4.5,4) (4,2.5) (3,3.5)}}{ 
  \draw[thick,blue] plot [smooth, tension=1] coordinates {\i};
}

  \foreach \i in {
    {(8.2,1.4) (9,3) (9.5,2) (9.81,1.82)}}{ 
  \draw[thick,blue,opacity=0.4] plot [smooth, tension=1] coordinates {\i};
}

% s_p
\draw[thick, red] plot [smooth, tension=1] coordinates {(5,0.1) (5.4,0.6) (5.8,0.45)} ;
%\node at (5,-0.15) {$p$} ;
\node at (6.1,0.6) {\textcolor{red}{$s(w)$}} ;

% C'
\node at (5.4,2.5) {\textcolor{gray}{$C_p$}} ;
\node at (4,2.2) {$C_{p+1}$} ;

\node at (6.4,2.1) {\textcolor{cyan}{$w_{i'}$}} ;
\node at (5.55,1.73) {$v_{i'}$} ;

% w_i'
\draw[thick, cyan] plot [smooth, tension=1] coordinates {(7.03,0.55) (6.3,1.1) (6.1,2.2) (6.8,3.2) (5.3,3.4) (4.9,3.6)} ;

% x, y
\node at (0.9,1.5) {\textcolor{blue}{$x$}} ;
\node at (9.4,1.8) {\textcolor{blue!40!white}{$y$}} ;

\end{tikzpicture}
\caption{The string $w_{i'} \in W$ with several neighbors in $V(C_p)$, and the new odd induced cycle~$C_{p+1}$ obtained by~\cref{lem:odd-cheese}.
  \Cref{lem:cell} will then locate $s(w)$ as enclosed by $x, w_{i'}$ and some (here, three) strings of~$V$.
  For legibility, a~string may be labeled by its corresponding vertex.}
\label{fig:odd-cheese-application}
\end{figure}
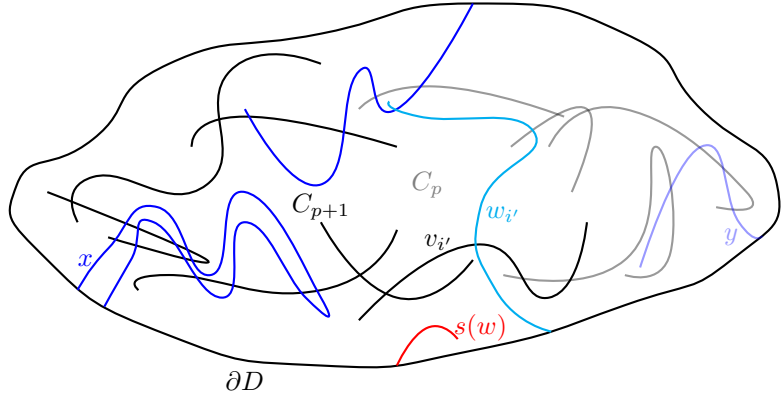

We can then obtain the announced milestone.

\begin{lemma}\label{lem:to-small-cell}
  There is an induced odd cycle $C'$ in $H[V(C) \cup W]$, two vertices $x \neq y \in W$, possibly part of $C'$, and a~subpath $P$ of $C'$ on 0, 1, or 2 vertices, all in $V(C)$, such that   
  \begin{compactitem}
  \item if $V(P) \neq \emptyset$, then $x$ is adjacent to one endpoint of $P$, and $y$, to the other (possibly equal) endpoint of~$P$, and if $V(P)=\emptyset$, then $x$ and $y$ are adjacent,
  \item one finite closed face $F_1$ of the arrangement $\{\partial D, s[r(x), xPy, r(y)]\}$ contains $s(w)$, and
  \item the other finite closed face $F_2$ contains every string of $V(C') \setminus N_{C'}[V(P) \cup \{x, y\}]$.  
  \end{compactitem}
\end{lemma}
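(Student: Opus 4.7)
The plan is to first apply~\cref{lem:to-simple} to obtain an induced odd cycle $C'$ in $H[V(C) \cup W]$, then produce an initial quadruple $(C', x, y, P)$ satisfying the three conditions of the present lemma (possibly with $|V(P)|$ as large as $|V(C)|$), and finally iteratively shorten~$P$ until $|V(P)| \leq 2$.

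The base case is the third outcome of~\cref{lem:to-simple}: it directly provides $x \neq y$ in $V(C') \cap W$, a subpath $x, v_i, \ldots, v_j, y$ of~$C'$ whose internal vertices are simply attached, and a closed face of $\{\partial D, s(x), s(v_i), \ldots, s(v_j), s(y)\}$ that contains $s(w)$ and avoids the rest of~$V(C')$. Setting $V(P) := \{v_i, \ldots, v_j\}$ yields a valid initial configuration. For the first and second outcomes of~\cref{lem:to-simple}, where $V(C') \cap W$ has fewer than two elements, I would supplement $C'$ by choosing $x, y \in W$ from the $w_\ell$'s of internal simply-attached vertices of~$C'$ (in the second outcome, $x = w_i$ is taken from $V(C') \cap W$ and only $y$ has to be added), and verify the face condition by a direct topological argument in the spirit of~\cref{lem:cell} but tailored to the case $|W|<2$.

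Given an eligible $(C', x, y, P)$ with $V(P) = \{v_i, \ldots, v_j\}$ and $j-i \geq 2$, the shortening step proceeds as follows. Pick any $v_\ell$ with $i < \ell < j$. Since $v_\ell$ is a simply-attached internal vertex of~$P$, the string $s(w_\ell)$ meets no string of $V(C')$ besides $s(v_\ell)$, so it lies in the closure of exactly one of the two faces carved out by the current separator, depending on which of the two arcs of $\partial D$ (delimited by $r(x)$ and $r(y)$) contains $r(w_\ell)$. In both arc cases, $s(w_\ell)$ divides a face into two sub-regions: in the arc bounding $F_1$, the string $s(w)$ lies in one sub-region; in the complementary arc, the connected set $V(C') \setminus N_{C'}[V(P) \cup \{x, y\}]$ lies in one sub-region (it cannot cross~$s(w_\ell)$). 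In either situation, I would replace $(x, y, P)$ by one of $(x, w_\ell, \{v_i, \ldots, v_\ell\})$ or $(w_\ell, y, \{v_\ell, \ldots, v_j\})$, choosing the direction so that $s(w)$ and the rest of~$C'$ end up on opposite sides of the new separator; the resulting configuration is eligible and satisfies $|V(P')| < |V(P)|$. The iteration therefore terminates within at~most $|V(C)|$ rounds.

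I expect the main obstacle to be the first two outcomes of~\cref{lem:to-simple}, because \cref{lem:cell} as stated requires at~least two $W$-strings on the cycle, so producing an initial face that encloses $s(w)$ and whose boundary uses only strings of $N_H[C']$ demands a separate topological argument in those cases. Once this initial configuration is secured, the shortening iteration is a uniform case analysis over the four sub-cases above.
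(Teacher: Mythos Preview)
Your plan is the same as the paper's: start from \cref{lem:to-simple}, manufacture an initial separator $x'P y'$ with $x',y'\in W$, then halve $P$ repeatedly using the $w_\ell$ of a simply-attached internal vertex. The shortening step you describe matches the paper's almost verbatim.

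Where you are overcautious is the ``main obstacle'' (first and second outcomes of \cref{lem:to-simple}). No tailored variant of \cref{lem:cell} is needed. In the first outcome every $v_\ell\in V(C)=V(C')$ is simply attached, so the $w_\ell$ are pairwise distinct; since $H$ is triangle-free (a~triangle in $H$ would give a~triangle in $G$ via \cref{lem:odd-cycle-in-G}), among any three of them two are non-adjacent, and you take those as $x',y'$. In the second outcome you take $y'$ to be the unique $W$-vertex already on $C'$ and $x':=w_i$ for some internal $v_i$ with $w_i\notin V(C')$. In both cases the path $x'Py'$ (with $P$ the appropriate subpath of $C'$) is an induced path in $H$, and the choice of which of the two $v_i$--$v_j$ arcs of $C'$ to call $P$ is dictated by which side of $s[r(x'),x'Py',r(y')]$ contains $s(w)$; this is a direct Jordan-curve argument, not a reworking of \cref{lem:cell}.

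One small omission: your loop halts only when $|V(P)|\le 2$, but you also need the exit ``$s[r(x'),x'Py',r(y')]\subseteq s(x')\cup s(y')$'', which occurs when $x'$ and $y'$ happen to be adjacent and the minimal connecting string bypasses $P$ entirely. In that case you output $P=\emptyset$; this is how the $|V(P)|=0$ clause of the statement arises.
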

\begin{proof}
  By~\cref{lem:to-simple}, there is an induced odd cycle $C'$ in $H[V(C) \cup W]$, two vertices $x' \neq y' \in W$, possibly part of $C'$, and a~possibly-empty subpath $P$ of $C'$ such that
  \begin{compactitem}
  \item $V(P) \subset V(C)$ and each internal vertex of~$P$ is simply attached in $C'$,
  \item $x'Py'$ is a path or a cycle (we allow $x'$ and $y'$ to be adjacent, even when $V(P) \neq \emptyset$),
  \item $\{\partial D, s[r(x'), x'Py', r(y')]\}$ has two finite closed faces $F_1 \supset s(w)$, and
  \item $F_2$ containing every string of~$V(C') \setminus N_{C'}[V(P) \cup \{x', y'\}]$.
  \end{compactitem}
  Indeed, we directly get this outcome if the third item of~\cref{lem:to-simple} holds.
  If instead the first item of~\cref{lem:to-simple} holds, we set $x' := w_i$ and $y' := w_j$ for any $v_i \neq v_j \in V(C) = V(C')$ such that $w_iw_j \notin E(H)$.
  This is possible to ensure since $H$ is triangle-free and $|V(C)| \geqslant 3$.
  The path $P$ is then chosen as the subpath of~$C$ from $v_i$ to $v_j$ such that $\partial D \cup s(x') \cup \bigcup_{u \in V(P)} s(u) \cup s(y')$ separates $s(w)$ from $V(C) \setminus N_C[P]$.
  If finally the second item of~\cref{lem:to-simple} holds, we set $x' := w_i$ for any $v_i \in V(C')$ such that $w_i \notin V(C')$, and $y'$ is defined as the only vertex of $W \cap V(C')$.
  We then set $P$ as previously.
    
  Now, while the path $P$ has at~least three vertices and $s[r(x'), x'Py', r(y')]$ passes through the strings of~$P$, let $v_i$ be an internal vertex of $P$.
  By construction, $s(v_i)$ is simply attached in $C'$.
  Let $P'$ be the subpath of $P$ going from the endpoint of $P$ neighboring $x'$ to $v_i$, and $P''$ be the subpath of $P$ going from $v_i$ to its other endpoint (neighboring $y'$).
  Then, either~$P'$ and the pair $x' \neq w_i \in W$ or $P''$ and the pair $w_i \neq y'$ satisfy the four items of the previous paragraph. 
  In the former case, we set $P := P'$ and $y' := w_i$, whereas in the latter, we set $P := P''$ and $x' := w_i$; see~\cref{fig:narrowing}.
  
    \begin{figure}[h!]
\centering
\begin{tikzpicture}
% Jordan curve 
  \draw[thick] plot [smooth cycle, tension=1] coordinates {(10,2.5)(9,3.5)(8,4.5)(6,4.9)(4,4.5)(2,4.1)(1,3.1)(0,2.5)(0.4,1.5)(2,0.5)(4,0.1)(6,0.3)(8,0.9)(9.4,1.5)};
  \node at (2.1,0.2) {$\partial D$} ;

% strings in V
\foreach \i in {                     
  {(5,3) (3,3.4) (2.3,3)},             
  {(4,4.1) (2.8,4) (2.5,2.5) (1,2.6) (0.8,2)},
  {(0.4,2.4) (2.5,1.5) (1.2,1.8)}}{     
  \draw[thick,black] plot [smooth, tension=1] coordinates {\i};
}

\foreach \i in {                     
  {(4,2) (5,1) (6,1.5)},
  {(4.5,0.7) (6,1.7) (7,1) (7.5,2)}}{     
  \draw[thick,green!60!black] plot [smooth, tension=1] coordinates {\i};
}

\draw[thick,green!90!black] plot [smooth, tension=1] coordinates {(1.6,1.1) (2,1.3) (4,1.1) (5,1.9)};

% strings in W
\foreach \i in {
  {(1.15,0.88) (1.5,1.5) (1.8,2) (2.7,1) (3,2) (3.8,1) (4,1) (3,2.4) (2.55,1.3) (1.85,2.2) (1.2,1.65) (0.8,1.1)},
  {(6,4.9) (5,3.5) (4.5,4) (4,2.5) (3,3.5)},
  {(7.03,0.55) (6.3,1.1) (6.1,2.2) (6.8,3.2) (5.3,3.4) (4.9,3.6)}}{ 
  \draw[thick,blue] plot [smooth, tension=1] coordinates {\i};
}

\draw[thick,cyan] plot [smooth, tension=1] coordinates {(3.8,0.1) (4.4,1) (5.1,1.4)};

% s_p
\draw[thick, red] plot [smooth, tension=1] coordinates {(5,0.1) (5.4,0.6) (5.8,0.45)} ;
%\node at (5,-0.15) {$p$} ;
\node at (6.1,0.6) {\textcolor{red}{$s(w)$}} ;

% C'
\node at (4,2.2) {$C'$} ;

\node at (0.6,0.84) {$r(x')$} ;
\node at (7.45,0.32) {$r(y')=r(y)$} ;
\node at (3.9,-0.12) {$r(x)$} ;

\node at (3.85,0.5) {\textcolor{cyan}{$x$}} ;
\node at (0.9,1.5) {\textcolor{blue}{$x'$}} ;
\node at (7.4,3) {\textcolor{blue}{$y'=y$}} ;

\end{tikzpicture}
\caption{Example of strings $x' \neq y' \in W$ and the subpath~$P$ (three green strings) of $C'$.
  The internal node of~$P$ is simply attached to~$C'$.
  This is the case when $x'$ should be updated (to the vertex of the cyan string).
  The new path $P$ (darker green) has two vertices, thus the process stops.}
\label{fig:narrowing}
\end{figure}
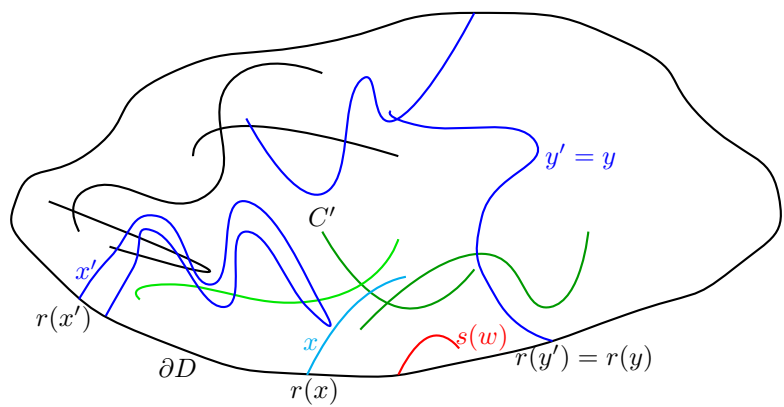

  We are done when $|V(P)| \leqslant 2$ or when $s[r(x'), x'Py', r(y')]$ is contained in $s(x') \cup s(y')$.
  In both cases, we set $x := x'$ and $y := y'$, and in the latter case, we set $P$ to be empty. 
\end{proof}

Following the notations of~\cref{lem:to-small-cell}, we define $z$ as the neighbor of~$x$ in $P$ if $V(P) \neq \emptyset$, and as $x$ otherwise.

\begin{lemma}\label{lem:close-to-C'}
  $V(C') \subseteq N_H^{\leqslant 6}(z)$.
\end{lemma}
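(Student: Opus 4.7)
My plan is to split $u \in V(C')$ into two cases based on whether $u$ lies in the ``local'' neighborhood $N_{C'}[V(P) \cup \{x, y\}]$ or outside it, bounding $d_H(z, u)$ by~$3$ in the first case via distances along~$C'$ and by~$6$ in the second case via the topological separation from~\cref{lem:to-small-cell}. As a preparatory step, observe that $z \in V(P) \cup \{x\}$, so a direct case analysis on $|V(P)| \in \{0,1,2\}$ yields $d_{C'}(z, s) \leqslant 2$ for every $s \in \{x\} \cup V(P) \cup \{y\}$, and $d_{C'}(z, u) \leqslant 3$ for every $u \in N_{C'}[V(P) \cup \{x,y\}]$; the extremal case is $|V(P)| = 2$, where $z = v_1$ and $u$ is the neighbor of~$y$ on~$C'$ that is not in~$P$. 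Since $C'$ is a subgraph of~$H$, the local case is settled.

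Now suppose $u \in V(C') \setminus N_{C'}[V(P) \cup \{x, y\}]$. Because $C'$ is induced, $u$ has no $H$-neighbor in $\{x\} \cup V(P) \cup \{y\}$, so the arc $\alpha := s[r(x), xPy, r(y)] \subseteq s(x) \cup \bigcup_{v \in V(P)} s(v) \cup s(y)$ is disjoint from~$s(u)$. The arc $\alpha$ joins two points of~$\partial D$ and, together with~$\partial D$, bounds the two finite closed faces $F_1 \supseteq s(w)$ and $F_2 \supseteq s(u)$ furnished by~\cref{lem:to-small-cell}. Let $Q \colon w = q_0, q_1, \ldots, q_m = u$ be a shortest $w$--$u$ path in~$H$. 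The strings $s(q_0), \ldots, s(q_m)$ all lie in~$D$ and their union is path-connected, so going from $s(w) \subseteq F_1$ to $s(u) \subseteq F_2$ along this union forces some $s(q_j)$ to meet~$\alpha$; consequently $q_j$ is equal or adjacent in~$H$ to some $s \in \{x\} \cup V(P) \cup \{y\}$.

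It remains to bound both $d_H(z, q_j)$ and $d_H(q_j, u)$. Applying the preparatory bound, $d_H(z, q_j) \leqslant d_{C'}(z, s) + 1 \leqslant 3$. For the other factor, recall that $V(C) \subseteq X_k$, so $d_H(w, v) = k$ for every $v \in V(C)$, and each $w_\ell \in W$ is an $H$-neighbor of some $v_\ell \in X_k$ and therefore satisfies $d_H(w, w_\ell) \in \{k-1, k, k+1\}$. Hence $m = d_H(w, u) \leqslant k+1$, while $j = d_H(w, q_j) \geqslant d_H(w, s) - 1 \geqslant (k-1) - 1 = k - 2$. Since $Q$ is shortest, $d_H(q_j, u) = m - j \leqslant 3$, and the triangle inequality yields $d_H(z, u) \leqslant 6$.

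The main obstacle I anticipate is making the planar-separation argument fully rigorous: one must verify that $\alpha$, together with~$\partial D$, genuinely partitions~$D$ into $F_1$ and~$F_2$, and that no intermediate string on~$Q$ can ``sneak'' from one face to the other without meeting~$\alpha$—in particular, pieces of $L_{i-1}$-strings whose endpoints lie on~$\partial D$ must not provide a shortcut along the circle~$\partial D$. A few degenerate cases (for instance, $s(w)$ itself touching~$\alpha$) can arise only when $k \leqslant 2$, and there the desired bound $d_H(z, u) \leqslant 6$ follows directly from $d_H(w, u) \leqslant k+1 \leqslant 3$ together with the trivial estimate $d_H(z, w) \leqslant 3$. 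Once the topological step is handled, the rest reduces to the routine distance bookkeeping above, exploiting that the vertices of $N_H[V(C')]$ lie in the narrow band $\{k-2, \ldots, k+2\}$ around~$w$.
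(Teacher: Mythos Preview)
Your argument is correct and matches the paper's almost line for line: the local vertices $N_{C'}[V(P)\cup\{x,y\}]$ are within distance~$3$ of~$z$ via the short path $xPy$, and for the remaining vertices a shortest $w$--$u$ path must hit~$\alpha$, yielding $d_H(z,q_j)\leqslant 3$ and $d_H(q_j,u)\leqslant (k+1)-(k-2)=3$ by the same layer arithmetic. The only slip is notational: since $x,y\in W$ need not lie on~$V(C')$, your bounds $d_{C'}(z,s)$ should read $d_H(z,s)$ (the path $xPy$ lives in~$H$, not necessarily in~$C'$), and correspondingly your ``$C'$ is induced'' step is not needed---\cref{lem:to-small-cell} already places $s(u)$ in~$F_2$; your worry about $\partial D$-shortcuts is moot because $F_1\cap F_2=\alpha$ and every string of~$R_H$ lies in the closed disk~$D$.
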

\begin{proof}
  We keep the notations of~\cref{lem:to-small-cell}.
  First we observe that any vertex in $N_{C'}[V(P) \cup \{x, y\}]$ is at distance at~most~3 of~$z$ in~$H$.
  It remains to see that every vertex $z'$ of~$C'$ such that $s(z')$ is contained in~$F_2$ is at distance at~most~6 from~$z$ in~$H$.

  Note that every vertex of $W$ is at distance $k-1$, $k$, or $k+1$ from $w$ in~$H$, since every vertex of~$C$ is at distance exactly $k$ from $w$ in~$H$, and every vertex of~$W$ has a~neighbor in~$V(C)$.  
  As $s(w)$ is in $F_1$, a~(shortest) path from $w$ to $z'$ has to contain a~vertex $y'$ such that $s(y')$ intersects $s[r(x), xPy, r(y)]$.
  Since vertices of $V(P) \cup \{x,y\}$ are at distance at least $k-1$ from $w$, vertex $y'$ is at distance at least $k-2$ from $w$.
  Now, as vertices of $C'$ are at distance at~most $k+1$ from $w$, vertex $y'$ is at distance at most~3 from $z'$.
  In turn, $z$ is at distance at~most~3 from $y'$, and we conclude.
\end{proof}

We further show that no edge of $C'$ can lie within $N_H^{=6}(z)$.

\begin{lemma}\label{lem:6th-IS}
  $N_H^{=6}(z) \cap V(C')$ is an independent set of~$H$ or $H$ admits an odd cycle of length at~most~9.
\end{lemma}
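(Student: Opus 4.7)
The plan is to argue by contradiction: suppose there is an edge $ab$ of $H$ with both endpoints in $N_H^{=6}(z) \cap V(C')$, and aim for an odd cycle of length at most~$9$ in~$H$. The idea is that the bound $d_H(z, z') \leq 6$ from \cref{lem:close-to-C'} is only tight under very restrictive conditions, and these conditions will let me build a~short odd closed walk through~$ab$.

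First I will revisit the proof of \cref{lem:close-to-C'}, where $d_H(z, z')$ is bounded as $d_H(z, y') + d_H(y', z') \leq 3 + 3$ for a~``crossing'' vertex $y'$ adjacent to some vertex of $\widehat P$ and lying on a~shortest $w$-$z'$ path. Saturating $d_H(y', z') = 3$ requires $z' \in W$ (for $z' \in V(C) \subseteq X_k$ one has $d_H(w,z') = k$, hence $d_H(y', z') \leq 2$), and saturating $d_H(z, y') = 3$ requires $y'$ to be adjacent to a~vertex of $\widehat P$ at distance~$2$ from~$z$. Inspecting the three cases $|V(P)| \in \{0, 1, 2\}$ shows that the latter only happens when $V(P) = \{v_i, v_j\}$, so that $\widehat P$ is the path $x$-$v_i$-$v_j$-$y$ with $z = v_i$, and moreover $y'$ must be adjacent to~$y$ and to no other vertex of~$\widehat P$ (else $d_H(z, y') \leq 2$). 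Thus both $a, b \in W$, we are in the case $|V(P)| = 2$, and there exist crossing vertices $y'_a, y'_b$ each adjacent to~$y$ and to no other vertex of~$\widehat P$, with $d_H(z, y'_a) = d_H(y'_a, a) = 3$ (and symmetrically for~$b$).

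Next I will invoke triangle-freeness of~$H$: any triangle of $H$ would project to a~triangle of~$G$ (two strings of $H$ coming from the same $L_{i-1}$-vertex are pairwise disjoint and hence non-adjacent in~$H$, so a~triangle of~$H$ has three pairwise distinct $G$-vertices), contradicting that $G$ has odd girth larger than~$11$. Since $y'_a$ and $y'_b$ are both adjacent to~$y$, they cannot be adjacent to each other, hence either $y'_a = y'_b$ or $d_H(y'_a, y'_b) = 2$, realised by the path $y'_a$-$y$-$y'_b$.

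Finally, concatenating a~shortest $y'_a$-$a$ path (length~$3$), the edge~$ab$, a~shortest $b$-$y'_b$ path (length~$3$), and the $y'_b$-to-$y'_a$ path (of length~$0$ or~$2$) produces a~closed walk in~$H$ of length~$7$ or~$9$. Being odd, this walk contains an odd cycle of length at most~$9$, as required. The main obstacle will be the first step: extracting from the somewhat loose inequalities inside the proof of~\cref{lem:close-to-C'} the precise structural conclusion that $|V(P)|$ must equal~$2$ and that both $y'_a, y'_b$ attach to~$\widehat P$ only at~$y$. Once these structural facts are secured, triangle-freeness and a~simple parity count finish the argument.
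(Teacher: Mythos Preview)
Your proof is correct and rests on the same idea as the paper's: saturating the bound $d_H(z,z') \leq 6$ from \cref{lem:close-to-C'} forces $|V(P)|=2$ and the crossing vertex $y'$ to attach to $\widehat P$ only at~$y$, after which two such short paths together with the edge~$ab$ yield an odd closed walk of length at most~$9$. The paper's execution is more direct, however: rather than pivoting at the intermediate vertices $y'_a,y'_b$ and invoking triangle-freeness of~$H$ to bound $d_H(y'_a,y'_b)$, it simply observes that $d_H(y,a)\leq 4$ and $d_H(y,b)\leq 4$ (each $y'$ being adjacent to~$y$), so the closed walk $y\to a\to b\to y$ already has odd length at most $4+1+4=9$. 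Your detour through $y'_a,y'_b$ and the triangle-freeness argument is not wrong, just unnecessary --- pivoting at~$y$ eliminates both the case split on $y'_a=y'_b$ versus $y'_a\neq y'_b$ and the need to argue that~$H$ is triangle-free.
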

\begin{proof}
  The proof of~\cref{lem:close-to-C'} shows that for some vertex $z' \in V(C')$ to be at distance~6 from~$z$ in $H$, there should be a~4-edge path from $y$ to $z'$.
  Thus two vertices $z', z''$ adjacent in $C'$ and at distance~6 from~$z$ in $H$ entail an odd cycle of length \emph{at~most}~9 in $H$, built from $y$, the corresponding two (non-necessarily edge-disjoint) 4-edge paths, $z'$, and $z''$. 
\end{proof}

\cref{lem:close-to-C',lem:6th-IS,lem:odd-cycle-in-small-ball} imply the existence of an odd cycle $C_o$ of length at~most 11 in $H$.
We finally see that this yields a~contradiction, as we can build an odd cycle of at~most the same length in~$G$.

\begin{lemma}\label{lem:odd-cycle-in-G}
Let $C_o$ be an odd cycle in $H$.
Then $G$ has an odd cycle of length at~most~$|V(C_o)|$.
\end{lemma}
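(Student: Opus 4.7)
The plan is to exploit the fact that $H$ is essentially a ``refined'' copy of a subgraph of $G$, where the only alteration is that strings of $L_{i-1} \cap N_G(V(X))$ may have been split by $\partial D$ into several pieces. Formally, for each string $s \in R_H$, there is a unique vertex $\pi(s) \in V(X) \cup (L_{i-1} \cap N_G(V(X)))$ whose string in $R$ contains $s$ as a subset. So the first step is to project $C_o$ via $\pi$ to a closed sequence of vertices in $G$, and then to extract an odd cycle of bounded length from it.

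Concretely, writing the vertices of $C_o$ in cyclic order as $u_1, u_2, \ldots, u_m$ with $m = |V(C_o)|$ odd, I would argue as follows. For each edge $u_j u_{j+1}$ of $C_o$ (indices modulo $m$), the corresponding strings in $R_H$ intersect, and since these strings are subsets of the strings of $R$ representing $\pi(u_j)$ and $\pi(u_{j+1})$, the latter strings intersect as well. Moreover, as recalled just before the definition of $H$, any two strings of $R_H$ that come from splitting a common vertex of $L_{i-1}$ are pairwise disjoint, hence non-adjacent in $H$; in particular $\pi(u_j) \neq \pi(u_{j+1})$. Therefore $\pi(u_j)\pi(u_{j+1}) \in E(G)$, and $\pi(u_1), \pi(u_2), \ldots, \pi(u_m), \pi(u_1)$ is a closed walk of length $m$ in $G$.

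It then remains to invoke the folklore fact that any graph containing a closed walk of odd length $m$ contains an odd cycle of length at most $m$. The short argument I would use: if the walk is not already a cycle, some vertex occurs at two non-consecutive positions $i < j$, and splitting the walk at this vertex yields two closed walks of lengths $j-i$ and $m-(j-i)$ summing to the odd number $m$, so one of them is odd and strictly shorter; iterate.

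The ``hard part'' here is essentially non-existent: everything follows from the definitions of $H$ and $R_H$ together with the standard odd-walk-to-odd-cycle reduction. The only subtlety worth being explicit about is why $\pi(u_j) \neq \pi(u_{j+1})$, which is exactly the independence-of-split-copies property noted in the construction of $H$; this is what guarantees that the projected sequence is a genuine closed walk in $G$ (no self-loops) and not merely a sequence in $V(G)$.
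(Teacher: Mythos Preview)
Your proof is correct and slightly more streamlined than the paper's. The paper argues by induction on the number of strings of the current cycle $\widehat C$ that are pieces of split $L_{i-1}$-strings rather than full strings of $R$: at each step it fixes one original vertex $y$, looks at all the pieces $y_1,\ldots,y_q$ of $y$ appearing along $\widehat C$, finds an arc of odd length between two consecutive pieces (using that $\sum_a \ell_a = |V(\widehat C)|$ is odd and that the pieces form an independent set so each $\ell_a \geqslant 2$), and replaces $\widehat C$ by this arc closed up with the full string $s(y)$. Your approach instead projects the whole cycle in one shot via $\pi$, obtains an odd closed walk in $G$, and invokes the standard odd-walk-to-odd-cycle reduction. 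The two are really the same idea viewed differently: the paper's per-vertex merging is exactly the walk-to-cycle splitting specialised to the repetitions caused by a single split vertex, whereas you handle all repetitions uniformly. Your version is shorter and avoids tracking the representation through the induction; the paper's version has the minor advantage of keeping an explicit represented cycle at every stage, but this is not used elsewhere.
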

\begin{proof}
  We initialize a~represented odd cycle $\widehat C$ to $C_o$ realized by the corresponding strings of~$R_H$. 
  We make an induction on the number of strings of~$\widehat C$ not part of~the representation $R$ of~$G$.
  When this number is 0, we conclude since $G$ contains the cycle $\widehat C$ as a~subgraph.
  Let $y_1, \ldots, y_q \in V(\widehat C)$ be all the vertices of $\widehat C$ whose strings are substrings of $s(y)$ for some $y \in L_{i-1} \cap N_G(V(X))$.

  We further assume that starting at $y_1 \in V(\widehat C)$, and turning in some fixed (arbitrary) direction along $\widehat C$, one encounters $y_1, y_2, \ldots, y_q$ in this order.
  Let, for every $a \in [q]$, $\ell_a$ be the number of edges of $\widehat C[y_a \rightarrow y_{a \pmod q + 1}]$.
  As, $|V(\widehat C)| = \sum_{a \in [q]} \ell_a$, at least one $\ell_a$ is odd.
  Recall that $\{y_1, \ldots, y_q\}$ is an independent set in $H$, hence every $\ell_a$ is strictly greater than~1.
  Then the string $s(y)$ and those of the internal vertices in the path between $y_a$ and $y_{a \pmod q + 1}$ form an odd cycle of length at~least~3 and at~most~$|V(\widehat C)|$.
  This defines the new represented odd cycle $\widehat C$, and concludes the proof.
\end{proof}

\subsection{8-coloring algorithm}

To claim \cref{thm:coloring}, we finally need to check that, given a~representation $P, \mathcal S$ of a~string graph $G$ of odd girth larger than 11, one can compute an 8-coloring of~$G$ in time polynomial in $|V(P)|$. Recall that $P$ is a~planar graph and $\mathcal S$ is a~set of non-empty connected sets in $P$ in one-to-one correspondence with $V(G)$ such that two distinct connected sets of $\mathcal S$ intersect if and only if the corresponding vertices of $G$ are adjacent.
Note that, as $G$ is in particular triangle-free, $|V(G)| \leqslant 2|V(P)|$.

We remind the reader that we compute one BFS in~$G$, and less than~$|V(G)|$ BFSes in auxiliary graphs $H$ of size at~most~$(|V(P)|+1)|V(G)|$.
After this we simply 2-color bipartite graphs whose combined number of vertices is at~most $|V(G)|$.
Thus, we shall just detail how to compute each auxiliary graph~$H$.

Let $X$ be the connected component of $G[L_i]$ giving rise to~$H$.
We start by adding every vertex of~$X$ to $V(H)$.
We compute the set $Y \subseteq V(P)$ of all the vertices contained in an element of $\mathcal S$ corresponding to a~vertex of~$X$.
Let $u' \in V(P)$ be a~vertex in the connected set of~$u_0$.
Let $Y' \subseteq Y$ be the vertices $v \in Y$ for which there is a~path in~$P$ from $u'$ to $v$ whose internal vertices are all in $V(P) \setminus Y$.
Note that $Y'$ can be computed in polynomial time in $|V(P)|$ by checking if $u'$ and $y$ are in the same connected component of $P-(Y \setminus \{v\})$.
The set $Y'$ is the combinatorial counterpart of $\partial D$.
In particular, we do not need to change the representation if $u'$ is in a~finite face ``made by~$Y$.''
This was merely helpful in the proof correctness.

For each vertex $v \in L_{i-1} \cap N_G(V(X))$ (alternatively we can only keep at~least~one neighbor per vertex of~$X$), we add to $V(H)$ one vertex for each connected component of $S \cap Y$ in $P$, where $S$ is the connected set of~$v$.
Note that $S' \cap Y' \neq \emptyset$ for every vertex set $S'$ of such a~connected component.
This step adds to $H$ fewer than $|V(G)|\cdot|V(P)|$ vertices.
Graph $H$ is finally defined as the intersection graph of all the connected sets of vertices in $V(H)$, which can be computed in time polynomial in~$|V(P)|$.

\section{Approximation algorithm for \textsc{Vertex Cover}}

For a graph $G$, let $\vcp(G)$ denote the size of a minimum vertex cover in $G$.
We will use the following result of Chleb\'ik and Chleb\'ikov\'a~\cite{DBLP:conf/swat/ChlebikC04}, which is a slight strengthening of the well-known Nemhauser--Trotter theorem~\cite{DBLP:journals/mp/NemhauserT74}.

\begin{theorem}[Chleb\'ik and Chleb\'ikov\'a~\cite{DBLP:conf/swat/ChlebikC04}, Nemhauser and Trotter~\cite{DBLP:journals/mp/NemhauserT74}]\label{thm:nemhausertrotter}
Given a graph $G$, one can compute in polynomial time a partition of $V(G)$ into $V_0,V_{1/2}, V_1$, such that
\begin{compactenum}
\item there are no edges between $V_0$ and $V_{1/2}$ or within $V_0$, 
\item $\vcp(G[V_{1/2}]) \geq \frac{1}{2} |V_{1/2}|$, and 
\item every minimum vertex cover $S$ of $G$ satisfies $V_1 \subseteq S \subseteq V_1 \cup V_{1/2}$.
\end{compactenum}
\end{theorem}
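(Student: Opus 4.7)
The plan is to derive the theorem from the half-integral linear programming relaxation of \textsc{Vertex Cover}. I would consider the LP
\begin{align*}
\text{minimize} \quad & \sum_{v \in V(G)} x_v \\
\text{subject to} \quad & x_u + x_v \geq 1 \text{ for every edge } uv \in E(G), \\
& x_v \geq 0 \text{ for every } v \in V(G).
\end{align*}
A key preliminary ingredient is that this LP admits a half-integral optimum $x^*$, i.e., $x^*_v \in \{0, \tfrac{1}{2}, 1\}$ for every $v$, and one such $x^*$ can be produced in polynomial time. The clean combinatorial route is K\"onig's theorem applied to the bipartite double cover $B$ of $G$, with vertex set $\{v', v'' : v \in V(G)\}$ and edges $u'v''$ and $u''v'$ for each $uv \in E(G)$: from any minimum vertex cover $C$ of $B$, one sets $x^*_v := |C \cap \{v', v''\}|/2$ and verifies feasibility, optimality, and half-integrality.

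Having $x^*$ in hand, I would define the partition by $V_a := \{v : x^*_v = a\}$ for $a \in \{0, \tfrac{1}{2}, 1\}$. Property~1 is immediate from LP feasibility: any edge inside $V_0$ or between $V_0$ and $V_{1/2}$ would force $x^*_u + x^*_v < 1$, violating the edge constraint. Property~3 follows from a standard complementary-slackness exchange argument: for any minimum vertex cover $S$ of $G$, a vertex $v \in V_1 \setminus S$ would force $N(v) \subseteq S$, and then one could strictly decrease $x^*_v$ and slightly increase values at its neighbors to produce a cheaper feasible fractional solution, contradicting the optimality of $x^*$; symmetrically, a vertex in $S \cap V_0$ can be removed from $S$ since all its neighbors lie in $V_1$ by property~1 and are therefore already in $S$.

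Property~2, the Chleb\'ik--Chleb\'ikov\'a strengthening, I would prove by contradiction; this is the least obvious step. Suppose $\vcp(G[V_{1/2}]) < |V_{1/2}|/2$, witnessed by a vertex cover $C$ of $G[V_{1/2}]$ of size less than $|V_{1/2}|/2$. Define $y_v := 1$ if $v \in V_1 \cup C$ and $y_v := 0$ otherwise. This $y$ is feasible for the LP of $G$: property~1 rules out edges within $V_0$ or between $V_0$ and $V_{1/2}$, every edge incident to $V_1$ is covered by its $V_1$-endpoint, and every edge inside $V_{1/2}$ is covered by $C$. But then
\[ \sum_v y_v \;=\; |V_1| + |C| \;<\; |V_1| + \tfrac{1}{2}|V_{1/2}| \;=\; \sum_v x^*_v, \]
contradicting the optimality of $x^*$.

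The main obstacle is really the first step: producing a half-integral LP optimum in polynomial time without appealing to a generic LP solver, whose output need not be half-integral. The bipartite-double-cover trick combined with K\"onig's theorem, equivalently a maximum-matching/minimum-cut computation in a bipartite graph, supplies exactly this, and is what keeps the whole procedure self-contained and polynomial.
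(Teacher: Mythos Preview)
The paper does not prove this theorem; it is quoted as a known result from~\cite{DBLP:conf/swat/ChlebikC04,DBLP:journals/mp/NemhauserT74} and used as a black box in the proof of Theorem~\ref{thm:vcstrings}. So your write-up is being compared to the literature proof rather than to anything in the paper.

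Your arguments for Properties~1 and~2 are correct and standard. However, the argument for Property~3 has a genuine gap. As written, you take an \emph{arbitrary} half-integral LP optimum $x^*$ and claim that every minimum vertex cover $S$ satisfies $V_1 \subseteq S \subseteq V_1 \cup V_{1/2}$. This is false: on the single edge $uv$, the assignment $x^*_u=1$, $x^*_v=0$ is a half-integral optimum with $V_1=\{u\}$ and $V_0=\{v\}$, yet $\{v\}$ is a minimum vertex cover not containing $V_1$. Your exchange argument breaks exactly here: decreasing $x^*_u$ by $\varepsilon$ forces you to raise $x^*_v$ by $\varepsilon$, so the LP value does not drop. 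The ``symmetric'' step is circular: you remove $v \in S \cap V_0$ on the grounds that all its neighbours lie in $V_1$ and are ``therefore already in $S$'', but $V_1 \subseteq S$ is precisely what you are trying to prove. What your perturbation actually yields is only $|V_1 \setminus S| \le |V_0 \cap S|$, hence $(S \setminus V_0) \cup V_1$ is another minimum vertex cover; this is the original Nemhauser--Trotter conclusion (\emph{some} optimum respects the partition), not the stronger Property~3.

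Two further remarks. First, you label Property~2 the Chleb\'ik--Chleb\'ikov\'a strengthening, but Property~2 already follows from classical Nemhauser--Trotter together with the LP lower bound; the genuine strengthening is Property~3, the ``every'' quantifier. Second, to obtain Property~3 one must pick a \emph{specific} half-integral optimum, namely one maximising $|V_{1/2}|$ (equivalently minimising $|V_0|$ and $|V_1|$); Chleb\'ik and Chleb\'ikov\'a show this extremal optimum is polynomially computable and that for it the sets $V_0,V_1$ are, respectively, avoided by and contained in every minimum vertex cover. Your proposal is missing both this extremal choice and the argument exploiting it.
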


Finally, we are ready to prove \cref{thm:vcstrings}, which we restate for convenience.

\thmvcstrings*

\begin{proof}
  Let $G$ be the input graph, given along with a representation.
The algorithm has three phases.

\subparagraph{Phase one.} We initialize $X = \emptyset$.
If $G$ contains an odd cycle of length at most 11, we include all its vertices into $X$ and remove them from the graph.
We repeat this step exhaustively; clearly this can be done in polynomial time. 
Let $G'$ be the graph obtained after the last iteration of the process. 

\subparagraph{Phase two.} We call the algorithm from \cref{thm:nemhausertrotter} on the graph $G'$ in order to obtain three sets $V_0,V_{1/2}$, and $V_1$.
We denote $Y = V_1$ and $G'' = G'[V_{1/2}]$.
Note that $G''$ is a string graph with odd girth larger than 11 and the representation of $G''$ can be easily obtained from the representation of $G$ by removing strings (or connected sets) corresponding to vertices in $V(G) \setminus V(G'')$.

\subparagraph{Phase three.} 
We apply the algorithm from \cref{thm:coloring} to find a proper coloring of $G''$ with at most 8 colors.
Let $c$ be the color that appears most frequently, and let $Z$ be set of vertices of $G''$ \emph{not} colored $c$.
Clearly $|Z| \leq \frac{7}{8} |V(G'')|$, so $|V(G'')| \geq  \frac{8}{7} |Z|$.
The algorithm returns $Q = X \cup Y \cup Z$.

\subparagraph{Analysis.}
First, let us argue that $Q$ is indeed a vertex cover.
Since $X \subseteq Q$, it is enough to show that $Q \cap V(G')=Y \cup Z$ is a vertex cover of $G'$.
Note that by the first property in \cref{thm:nemhausertrotter} and since $V_1 = Y$, all edges of $G'$ not contained in $G''$ are covered by $Q$.
So we are left with showing that $Q \cap V(G'') = Z$ is a vertex cover of $G''$.
This is clearly true, as the complement of $Z$ in $G''$ is an independent set (of color $c$).

Now let us analyze the approximation factor.
Let $S$ be an optimum solution, i.e., a vertex cover of $G$ of size $\vcp(G)$.
Note that for each odd cycle $C$ removed in the first phase, $S \cap C$ must contain at least $\frac{|C|}2$ vertices in order to cover all the edges of $C$.
As each removed cycle has at most 11 vertices, we conclude that $|S \cap X| \geq \frac{6}{11} |X|$.

Note that $S' \setminus X$ is a vertex cover of $G'$.
By the third  property in \cref{thm:nemhausertrotter} we observe that $Y = V_1 \subseteq S$, and so $S \setminus (X \cup Y)$ is a vertex cover of $G''$.
By the second  property in \cref{thm:nemhausertrotter} we obtain that $|S \setminus (X \cup Y)| \geq \frac{1}{2} |V(G'')|$.

Summing up, we obtain
\begin{align*}
|S| = & \; |S \cap X| + |S \cap Y| + |S \setminus (X \cup Y)| \geq \frac{6}{11} |X| + |Y| +  \frac{1}{2} |V(G'')|  \\ 
\geq & \; \frac{6}{11} |X| + |Y| + \frac{4}{7} |Z| \geq \frac{6}{11} |X \cup Y \cup Z| = \frac{6}{11} |Q|, 
\end{align*}
which means that $|Q| \leq \frac{11}{6} |S| = \frac{11}{6} \vcp(G)$. This completes the proof.
\end{proof}

The proof above is easily adapted to show \cref{thm:promise}.
The first two phases remain unchanged.
Let $\mathcal{C}$ be the family of odd cycles found in phase one and let $Y$ be the set found in phase two.
Once we reach phase three, we do not call the coloring algorithm on $G''$, as its running time might not be polynomial in~$|V(G)|$.
Instead, we check if $\sum_{C \in \mathcal{C}} \lceil |V(C)|/2 \rceil + |Y| + |V(G'')|/2 >k$ and, if so, we reject the instance.
Note that the sum in the expression above is a lower bound on $\vcp(G)$, so this step is correct.
Otherwise, we accept the instance. The approximation guarantee is estimated as in the proof of \cref{thm:vcstrings}.

\end{document}